\definecolor{gris}{cmyk}{0,0,0,0.15}
\newtheorem{thm}{Theorem}[section]
\newtheorem{lemma}{Lemma}[section]
\begin{document}
\begin{flushright} CUQM-161~~~~~~~\end{flushright}
\title{Refining the general comparison theorem for the Klein--Gordon equation}
\author{Richard L. Hall}
\email{richard.hall@concordia.ca}
\affiliation{Department of Mathematics and Statistics, Concordia University,
1455 de Maisonneuve Boulevard West, Montr\'eal,
Qu\'ebec, Canada H3G 1M8}
\author{Hassan Harb}
\email{hassan.harb@concordia.ca}
\affiliation{Department of Mathematics and Statistics, Concordia University,
1455 de Maisonneuve Boulevard West, Montr\'eal,
Qu\'ebec, Canada H3G 1M8}

\begin{abstract}

By recasting the Klein--Gordon equation as an eigen-equation in the coupling parameter $v > 0,$ the basic Klein--Gordon comparison theorem may  be  written  $f_1\leq f_2\implies G_1(E)\leq G_2(E)$, where $f_1$ and $f_2$, are the monotone non-decreasing shapes of two central potentials $V_1(r) = v_1\,f_1(r)$ and $V_2(r)  = v_2\, f_2(r)$ on $[0,\infty)$. Meanwhile $v_1 = G_1(E)$ and $v_2 = G_2(E)$ are the corresponding coupling parameters that are functions of the energy $E\in(-m,\,m)$. We weaken the sufficient condition for the ground-state spectral ordering by proving (for example in $d=1$ dimension) that if 
$\int_0^x\big[f_2(t) - f_1(t)\big]\varphi_i(t)dt\geq 0$, the couplings remain ordered $v_1 \leq v_2$ where $i = 1\, {\rm or}\,  2, $ and  $\{\varphi_1, \varphi_2\}$  are the ground-states corresponding respectively to the couplings $\{v_1,\, v_2\}$ for a given $E \in (-m,\, m).$. This result is extended to spherically symmetric radial potentials in $ d > 1 $ dimensions.
\end{abstract}

\maketitle
\noindent{\bf Keywords:~}Klein--Gordon equation, potential function, comparison theorem,  refined comparison theorem.  \\  \noindent{\bf PACS:} 03.65.Pm, 03.65.Ge.
\vskip 0.2 in
\section{Introduction}
The elementary comparison theorem of non-relativistic quantum mechanics states that if two potentials are ordered, then the respective discrete eigen energies are correspondingly ordered. When the Hamiltonian $H = -\Delta + V$ is bounded from below, this theorem may be established in the Schr\"odinger case as a direct result of the min-max variational principle \cite{ReedSimon}. This is important because it allows immediately for spectral approximations in terms of known exact solutions of comparable problems.     A variational principle in such a simple form in the relativistic case would appear at first sight  to be unattainable since the energy operators are not bounded from below \cite{Fr, Gold, Grant}. Exact analytical solutions of the Klein--Gordon equation for square-well, exponential, Woods-Saxon \cite{RDWoods}, and Cusp potentials are presented in \cite{Gnr1, MBL, SSW, VMC}, and  cut-off Coulomb potentials were solved analytically by Barton \cite{Barton} and  subsequently generalized and confirmed numerically by Hall \cite{Hallcutoff}. It is clear from these examples that the graphs of $E(v)$ are certainly not monotone. However, by the use of monotonic analysis several relativistic comparison theorems have been  established \cite{SpecDirac, Hallrel, HallDirac, Aliyu, GChen, GChen1}, but the proofs of such results for the Klein--Gordon equation were all restricted to positive energies. In a recent paper \cite{HH}  we were able to generalize these comparison theorems by studying the eigen--problem in the coupling parameter $v$ for a given $E \in (-m,\, m)$.  The spectral  relations expressed in form $v = G(E)$ are single-valued {\it functions}, which facilitate the design of the unrestricted Klein--Gordon comparison theorems. 
\medskip

In the present paper we consider the Klein--Gordon relativistic equation with an attractive central potential $V(x) = v\,  f(x)$ for a single particle. We proved in \cite{HP-16} that if two potential shapes are ordered, $f_1(x)\leq f_2(x)$, then the corresponding spectral curves are similarly ordered. In the present paper we tighten the condition for this theorem by proving that the ordering of the coupling parameters is still preserved for the ground state, when the graphs of the potential shapes cross over, provided $\int_0^x\big[f_2(t) - f_1(t)\big]dt\geq 0$. Moreover, if one of the wave functions $\varphi_1$ or $\varphi_2$ is known, we prove that $\int_0^x\big[f_2(t) - f_1(t)\big]\varphi_i(t)dt\geq 0\implies G_1(E)\leq G_2(E),$ $i = 1,2$. This is a stronger version of the  theorem because, since the ground state is non-increasing on $[0,\infty)$, this allows the potential shapes to cross over `even more' with the ordering of couplings still preserved. The idea of refining the comparison theorems of non-relativistic and relativistic quantum mechanics has been first presented in Ref.\cite{HRQ}, and was applied for the Dirac equation \cite{Dirac1, Dirac2}, and the Klein--Gordon equation \cite{HP-16}, but the latter case was restricted to non-negative energies. The present work removes the restriction to positive energies and the results are obtained for attractive central potentials in all dimensions $d\geq 1$.

\section{Refined Theorems.}

\subsection{One dimensional case}
The Klein--Gordon equation in one dimension is given by \cite{PS, WG}:
\begin{eqnarray}\label{KG}
\varphi^{\prime\prime}(x) =\big[m^2-\big(E-V(x)\big)^2\big]  \varphi(x),\quad x\in\mathbb{R}.
\end{eqnarray}
 where $\varphi^{\prime\prime}$ denotes the second order derivative of $\varphi$ with respect to $x$, natural units $\hbar=c=1$ are used, and $E$ is the energy of a spinless particle of mass $m$.
 We suppose that the potential function $V$ is expressed as $V(x)=v f(x)$ with $v>0$ and $f$ satisfies the following conditions:
\begin{enumerate}
\item  $V(x) = v f(x), x\in\mathbb{R}$, where $v>0$ is the coupling parameter and $f(x)$ is the potential shape;
\item  $f$ is even $f(x) = f(-x)$;
\item  $f$ is not identically zero, and is non-positive, that is $f(x)\leq 0$;
\item  $f$ is attractive, that is $f$ is monotone non-decreasing over $[0,\infty)$;
\item  $f$ vanishes at infinity, i.e $\displaystyle\lim_{x\to \pm \infty} f(x)= 0$.
\end{enumerate}
We also assume that $V(x)=v f(x)$ is in this class $\mathcal{P}$ of potentials, for which the Klein--Gordon equation (\ref{KG}) has at least one discrete eigenvalue $E$, and that equation (\ref{KG}) is the eigen-equation for the eigenstates.
Because of condition $5$, equation (\ref{KG})  has the asymptotic form
\begin{eqnarray*}
\varphi^{\prime\prime}=(m^2-E^2)\varphi,
\end{eqnarray*}
at infinity, with solutions $\varphi(x)=C_1 e^{\sqrt{k} |x|}+C_2 e^{-\sqrt{k} |x|}$, where $C_1$ and $C_2$ are constants of integration, and $k=m^2-E^2$.
The radial wave function of $\varphi$ vanishes at infinity; thus, $C_1=0$. Since $\varphi \in L^2(\mathbb{R})$, then $k>0$ which means that
\begin{eqnarray}\label{cond}
 |E|<m,
\end{eqnarray}
which is also a result discussed by Greiner \cite{WGG}. Suppose that $\varphi(x)$ is a solution of (\ref{KG}). Then by direct substitution we conclude that $\varphi(-x)$ is another solution of (\ref{KG}). Thus, by using linear combinations, we see that all the solutions of this equation may be assumed to be either even or odd.
Hence, if $\varphi$ is even then $\varphi^\prime(0) = 0$, and if $\varphi$ is odd then $\varphi(0) = 0$.
Since $\varphi\in L^2(\mathbb{R})$ then $\int_{-\infty}^{+\infty}\varphi^2 dx <\infty$. This means that the wave functions can be normalized and consequently we shall assume that $\varphi$ satisfies the normalization condition
\begin{eqnarray}\label{norm}
 ||\varphi||^2=\int_{-\infty}^{\infty} \varphi^2(x)  dx=1.
\end{eqnarray}
\begin{lemma}
\begin{eqnarray}\label{mineq}
E\displaystyle\int_0^{\infty} f(x)\varphi^2(x) dx \,  <  \, v \displaystyle\int_0^{\infty} f^2\varphi^2(x)dx, \quad \forall \, |E| \, <m.
\end{eqnarray}
\end{lemma}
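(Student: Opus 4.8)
The plan is to convert equation~(\ref{KG}) into an integral identity that is quadratic in the coupling $v$, and then extract the desired inequality from the signs of its terms. First I would expand the relativistic kinetic term, writing $(E-vf)^2 = E^2 - 2Evf + v^2 f^2$, so that (\ref{KG}) takes the form
\[
\varphi'' = (m^2 - E^2)\varphi + 2Ev\,f\varphi - v^2 f^2\varphi .
\]
Multiplying through by $\varphi$ and integrating over $[0,\infty)$, the left-hand side is handled by integration by parts, $\int_0^\infty \varphi\varphi''\,dx = -\varphi(0)\varphi'(0) - \int_0^\infty (\varphi')^2\,dx$. The boundary contribution at infinity vanishes because $\varphi,\varphi'\to 0$ there, and the term at the origin vanishes for either parity of the eigenstate (for an even state $\varphi'(0)=0$, for an odd state $\varphi(0)=0$). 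Using the normalization (\ref{norm}) together with the evenness of $\varphi^2$ to write $\int_0^\infty\varphi^2\,dx = \tfrac12$, this yields the identity
\[
v^2\int_0^\infty f^2\varphi^2\,dx - 2Ev\int_0^\infty f\varphi^2\,dx = \frac{m^2 - E^2}{2} + \int_0^\infty(\varphi')^2\,dx .
\]

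The crucial observation is that the right-hand side of this identity is strictly positive for every $|E|<m$: the term $(m^2-E^2)/2$ is positive by (\ref{cond}) and $\int_0^\infty(\varphi')^2\,dx\ge 0$. Dividing the positive left-hand side by $v>0$ then gives $v\int_0^\infty f^2\varphi^2\,dx > 2E\int_0^\infty f\varphi^2\,dx$, which is actually slightly stronger than (\ref{mineq}).

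To descend from this to the claimed inequality I would use the sign information carried by the hypotheses on $f$. Since $f\le 0$ and $f\not\equiv 0$, we have $\int_0^\infty f\varphi^2\,dx \le 0$ while $\int_0^\infty f^2\varphi^2\,dx > 0$. If $E\ge 0$ then $E\int_0^\infty f\varphi^2\,dx \le 0 < v\int_0^\infty f^2\varphi^2\,dx$ at once; if $E<0$ then $E\int_0^\infty f\varphi^2\,dx \ge 0$, whence $E\int_0^\infty f\varphi^2\,dx \le 2E\int_0^\infty f\varphi^2\,dx < v\int_0^\infty f^2\varphi^2\,dx$. In either case (\ref{mineq}) follows.

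I expect the only genuine subtlety to be the careful justification of the boundary terms — confirming the decay of $\varphi$ and $\varphi'$ at infinity from condition~5 and the asymptotic analysis already given, and disposing of the origin through the parity of the eigenstate — since the algebra of the quadratic identity and the final sign argument are routine. A secondary point worth recording cleanly is that the factor of $2$ gap between the derived inequality and (\ref{mineq}) is closed precisely by the non-positivity of $\int_0^\infty f\varphi^2\,dx$.
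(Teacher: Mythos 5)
Your proposal is correct and follows essentially the same route as the paper: multiply the expanded Klein--Gordon equation by $\varphi$, integrate over $[0,\infty)$ with integration by parts to obtain the quadratic identity whose right-hand side is strictly positive for $|E|<m$, and then close the factor-of-two gap using $\int_0^\infty f\varphi^2\,dx\le 0$ with the same case split on the sign of $E$. Your treatment of the boundary terms and of the strict positivity of $\int_0^\infty f^2\varphi^2\,dx$ is in fact slightly more careful than the paper's.
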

\begin{proof} Expanding equation (\ref{KG}) we get:
\begin{eqnarray*}
\varphi^{\prime\prime}(x)=(m^2-E^2)\varphi(x) + 2Evf(x)\varphi(x) - v^2 f^2(x)\varphi(x).
\end{eqnarray*}
Multiplying both sides by $\varphi$ and integrating over $[0,+\infty)$ we obtain:
\begin{eqnarray*}
\displaystyle\int_0^{\infty} \varphi^{\prime\prime}(x)\varphi(x) dx=\dfrac{1}{2}(m^2 - E^2)+  2Ev\displaystyle\int_0^{\infty} f\varphi^2(x)dx - v^2\displaystyle\int_0^{\infty} f^2\varphi^2(x) dx.
\end{eqnarray*}
After applying integration by parts and using the fact that $(\varphi\varphi^{\prime})\bigg|_0^{\infty} = 0$ for any $\varphi$, the left-hand side of the last equation becomes $-\displaystyle\int_0^\infty \big(\varphi^{\prime}(x)\big)^2dx$.
Thus
\begin{eqnarray*}
 2E\displaystyle\int_0^{\infty} f(x)\varphi^2(x) dx -  v \displaystyle\int_0^{\infty} f^2\varphi^2(x)dx = -\int_0^{\infty} \big(\varphi^{\prime}(x)\big)^2dx+\dfrac{1}{2}(E^2-m^2) < 0.
\end{eqnarray*}
\begin{enumerate}
\item If $E\geq0$, then the result follows immediately;
\item If $E < 0$, then since
\begin{eqnarray*}
E\displaystyle\int_0^{\infty} f(x)\varphi^2(x) dx < 2E\displaystyle\int_0^{\infty} f(x)\varphi^2(x) dx,
\end{eqnarray*}
we get the desired result.
\end{enumerate}
\end{proof}
\medskip
\noindent We now define the operator $K$ as:
\begin{eqnarray}\label{op}
K=-\frac{\partial^2}{\partial x^2}+2Evf-v^2f^2.
\end{eqnarray}
If $\varphi$ is solution of the Klein--Gordon equation (\ref{KG}), then we have:
\begin{eqnarray}\label{KG2}
K\varphi = (E^2-m^2)\varphi,
\end{eqnarray}
and it follows 
\begin{eqnarray}\label{KG3}
\langle K\rangle = \langle\varphi,K\varphi\rangle=\langle\varphi,(E^2-m^2)\varphi\rangle=E^2-m^2,
\end{eqnarray}
where $\langle\varphi,\psi\rangle = \displaystyle\int_{-\infty}^{\infty} \varphi(x)\psi(x)dx$ for all $\varphi,\psi\in L^2(\mathbb{R})$.
We observe that $K$ is symmetric, that is to say: $\langle\varphi, K\psi\rangle = \langle K\varphi, \psi\rangle$.\\

\noindent We now consider the parameter $a\in [0,1]$ and the two potential shapes $f_1$ and $f_2$ with $f = f(a,x) = f_1(x) + a\big[f_2(x) - f_1(x)\big]$, where $f_1\leq f_2 \leq 0$.
Hence $f$ is non-positive, attractive, even, and vanishes at infinity. We note that $f(0,x) = f_1(x)$ when $a=0$, and  $a = 1$ when $f(1,x) = f_2(x)$,
and
\begin{eqnarray}\label{rel}
\frac{\partial f}{\partial a} = f_2(x) -f_1(x)\geq 0.
\end{eqnarray}
 Hence, $f$ is monotone non-decreasing in the parameter $a$.
Let $v$ depend on $a$ and  $E$ is be a constant, that is $v = v(a)$ and $\frac{\partial E}{\partial a} = 0$, and $-m < E < m$.
We again consider the symmetric operator $K$ in (\ref{op}), and we define $\varphi_a$ to be the partial derivative of $\varphi$ with respect to $a$.
Differentiating equation (\ref{KG3}) with respect to $a$ we get:
\begin{eqnarray}\label{diff1}
\langle\varphi_a,K\varphi\rangle + \langle\varphi,K_a\varphi\rangle + \langle\varphi,K\varphi_a\rangle = 0
\end{eqnarray}
Applying the partial derivative with respect to $a$ to equation (\ref{norm}) and using the symmetry of $K$, we obtain the new orthogonality relation
\begin{eqnarray*}
\langle\varphi_a,K\varphi\rangle = \langle\varphi, K\varphi_a\rangle = (E^2-m^2)\langle\varphi_a,\varphi\rangle = 0.
\end{eqnarray*}
We also have:
\begin{eqnarray}\label{Ka}
K_a = 2Ev_af + 2Ev(f_2 - f_1) - 2vv_af^2 - 2v^2f(f_2-f_1),
\end{eqnarray}
with $v_a$ defined as $\frac{\partial v}{\partial a}$.
Equation (\ref{diff1}) becomes:
\begin{eqnarray*}
Ev_a\displaystyle\int_0^{\infty} f\varphi^2(x) dx + Ev\int_0^{\infty}\bigg(f_2(x) - f_1(x)\bigg)\varphi^2(x)dx - vv_a\displaystyle\int_0^{\infty} f^2\varphi^2(x)dx\\
 - v^2\int_0^{\infty}f\bigg(f_2(x) - f_1(x)\bigg)\varphi^2(x) dx= 0.
\end{eqnarray*}
This leads us to the following relation:
\begin{eqnarray}\label{rel1}
v_a = \dfrac{v I}{E\displaystyle\int_0^{\infty} f(x)\varphi^2(x) dx - v \displaystyle\int_0^{\infty} f^2\varphi^2(x)dx},
\end{eqnarray}
where
\begin{eqnarray}\label{I}
I =\displaystyle\int_0^{\infty}\bigg(f_2(x) - f_1(x)\bigg) \bigg(vf(x) - E\bigg)\varphi^2(x) dx.
\end{eqnarray}
\begin{lemma}
The ground state eigenfunction $\varphi$ of the Klein--Gordon equation is a non-increasing function for $x\in[0,\infty)$, and for any energy $E$ such that  $|E| < m$.
\end{lemma}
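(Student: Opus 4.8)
The plan is to rewrite equation~(\ref{KG}) in the form $\varphi^{\prime\prime}=W\varphi$, where $W(x)=m^2-\big(E-vf(x)\big)^2$, and to deduce the monotonicity of $\varphi$ purely from the sign pattern of $W$. I take $\varphi$ to be the nodeless ground state, which is even and may be chosen strictly positive, so that $\varphi(0)>0$, $\varphi^{\prime}(0)=0$, and (by the asymptotic analysis already carried out) $\varphi(x)\to 0$ and $\varphi^{\prime}(x)\to 0$ as $x\to\infty$. The goal is then to show $\varphi^{\prime}(x)\leq 0$ on $[0,\infty)$.

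The crucial step is to locate the region where $W<0$. Because $f\leq 0$ and $|E|<m$, one has $E-vf(x)=E+v|f(x)|\geq E>-m$, so the inequality $\big(E-vf(x)\big)^2>m^2$ is equivalent to the one-sided condition $E-vf(x)>m$. Since $f$ is non-decreasing on $[0,\infty)$ and non-positive, $|f|$ is non-increasing, hence $g(x):=E-vf(x)$ is non-increasing; therefore $\{x\geq 0:g(x)>m\}$ is an initial interval $[0,x^{*})$. Consequently $W<0$ on $[0,x^{*})$ and $W>0$ on $(x^{*},\infty)$, a single sign change. That $x^{*}>0$ follows from the existence of the bound state, for if $W\geq 0$ everywhere then $\varphi^{\prime\prime}=W\varphi\geq 0$ would make the positive function $\varphi$ convex on $\mathbb{R}$, which is incompatible with $\varphi\to 0$ at $\pm\infty$ unless $\varphi\equiv 0$; and $x^{*}<\infty$ because $g(\infty)=E<m$.

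Given this structure the monotonicity is immediate. On $[0,x^{*})$ we have $\varphi^{\prime\prime}=W\varphi<0$, so $\varphi$ is concave; with $\varphi^{\prime}(0)=0$ this forces $\varphi^{\prime}<0$ on $(0,x^{*})$ and in particular $\varphi^{\prime}(x^{*})<0$. On $(x^{*},\infty)$ we have $\varphi^{\prime\prime}=W\varphi>0$, so $\varphi^{\prime}$ is non-decreasing; were $\varphi^{\prime}$ to reach the value $0$ at some $x_{1}>x^{*}$, then $\varphi^{\prime}\geq 0$ for all $x\geq x_{1}$ and $\varphi$ would be non-decreasing and positive thereafter, contradicting $\varphi\to 0$. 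Hence $\varphi^{\prime}<0$ on $(x^{*},\infty)$ as well, and combining the two regions gives $\varphi^{\prime}\leq 0$ throughout $[0,\infty)$, i.e.\ $\varphi$ is non-increasing.

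The point that needs the most care---and the reason a direct argument via the monotonicity of $W$ fails---is the case $E<0$. There $g=E-vf$ still decreases, but once it passes through zero the square $g^{2}$ grows again, so $W$ itself is not monotone (it rises and then falls). What survives, and is all that the argument uses, is the single sign change of $W$, which rests only on the monotonicity of $g$ together with the one-sided bound $g>-m$ supplied by $|E|<m$ and $f\leq 0$; this treats $E\geq 0$ and $E<0$ uniformly. The remaining delicate point is the justification of $W(0)<0$ from the mere existence of a decaying nodeless eigenstate, for which the convexity obstruction noted above is the clean route.
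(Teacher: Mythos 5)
Your proof is correct and follows essentially the same route as the paper: the lemma there is obtained from $\varphi'(0)=0$ together with the fact (cited from an earlier paper of the authors) that $\varphi$ is concave on $[0,V^{-1}(E-m))$ and convex on $[V^{-1}(E-m),\infty)$, and your $x^{*}$ is exactly that turning point $V^{-1}(E-m)$. The only difference is that you prove the single sign change of $W$ and the convex-region step in full rather than citing them, which makes the argument self-contained but not a different approach.
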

\begin{proof}
Since $\varphi$ is an even state, then $\varphi^{\prime}(0) = 0$, and since \cite{HH} $\varphi$ is concave on $\big[0,V^{-1}(E-m)\big)$ and convex on $\big[V^{-1}(E-m),\infty\big)$, then $\varphi^{\prime}(x)\leq 0,\quad x\in[0,\infty)$.
\end{proof}
\begin{thm}
For any two potentials $f_1$, $f_2\in\mathcal{P}$ we have:
\begin{eqnarray}\label{0}
\mu(x) = \displaystyle\int_0^x[f_2(t) - f_1(t)]dt\geq 0\quad x\in[0,\infty)\implies G_1(E)\leq G_2(E),
\end{eqnarray}
for any ground state energy $E$.
\end{thm}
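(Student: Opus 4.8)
The plan is to show that the coupling $v=v(a)$ is non-decreasing in the interpolation parameter $a\in[0,1]$, since $f(0,x)=f_1$ and $f(1,x)=f_2$ give $v(0)=G_1(E)$ and $v(1)=G_2(E)$, so $v$ non-decreasing yields $G_1(E)\le G_2(E)$. By the relation (\ref{rel1}) we have $v_a = vI/\big(E\int_0^\infty f\varphi^2\,dx - v\int_0^\infty f^2\varphi^2\,dx\big)$; inequality (\ref{mineq}) makes the denominator strictly negative and $v>0$, so it suffices to prove that the numerator integral $I$ of (\ref{I}) satisfies $I\le0$ for every $a$, for then $v_a\ge0$ throughout $[0,1]$.

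First I would rewrite $I=\int_0^\infty (f_2-f_1)(vf-E)\varphi^2\,dx$ by recognising $f_2-f_1=\mu'$ and integrating by parts, obtaining $I=\big[\mu\,(vf-E)\varphi^2\big]_0^\infty-\int_0^\infty \mu\,\big[(vf-E)\varphi^2\big]'\,dx$. The boundary term vanishes at the origin because $\mu(0)=0$, and at infinity because $\varphi\in L^2$ forces $\varphi^2\to0$ while $\mu$ remains finite (the shapes $f,f_1,f_2\in\mathcal{P}$ all vanish at infinity). Hence $I=-\int_0^\infty \mu\,\big[(vf-E)\varphi^2\big]'\,dx$, and since $\mu\ge0$ by hypothesis it is enough to prove that the weight $(vf-E)\varphi^2$ is non-decreasing on $[0,\infty)$.

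Differentiating gives $\big[(vf-E)\varphi^2\big]'=vf'\varphi^2+2(vf-E)\varphi\varphi'$. The first term is non-negative because $v>0$ and $f$ is attractive, so $f'\ge0$. For the second term I would invoke the ground-state monotonicity lemma: $\varphi\ge0$ and $\varphi'\le0$, so $\varphi\varphi'\le0$, whence $2(vf-E)\varphi\varphi'\ge0$ precisely when $vf-E\le0$. Since $f\le0$, this holds automatically for every $E\ge0$, which disposes of the non-negative-energy case at once and recovers $G_1(E)\le G_2(E)$.

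The main obstacle is the genuinely relativistic regime $E<0$, the case the paper is designed to reach. There the sign of $vf-E$ is no longer definite: in the outer region where $|vf|<|E|$ one has $vf-E>0$, so the second term can be negative and $(vf-E)\varphi^2$ need not be pointwise non-decreasing (it in fact decays back to $0^+$ at infinity). To close this case I would abandon the pointwise sign and instead control the tail integral, splitting at the point $x_1$ where $vf(x_1)=E$ and estimating $\int_{x_1}^\infty \mu\,\big[(vf-E)\varphi^2\big]'\,dx$ against the manifestly favourable inner contribution. The structural facts to exploit are that $f$ is monotone non-decreasing with $f'\to0$, which couples the decay of $f_2-f_1$, and hence of $\mu$, to that of the potential tail, together with the convexity and exponential decay of $\varphi$ furnished by the ground-state lemma through $k=m^2-E^2>0$. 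I expect this tail estimate to be the delicate step, while the reduction to $I\le0$ and the treatment of the inner region are routine.
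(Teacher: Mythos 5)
Your reduction is exactly the paper's: interpolate $f(a,x)$, differentiate $v(a)$, use Lemma II.1 (inequality (\ref{mineq})) to fix the sign of the denominator of (\ref{rel1}), integrate $I$ by parts against $\mu$, discard the boundary terms, and reduce everything to the sign of the integrand $vf'\varphi^2+2(vf-E)\varphi\varphi'$. Up to that point, and for $E\ge 0$, your argument coincides with the paper's and is fine.

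The gap is the case $E<0$, which you correctly identify as the crux but do not close. The paper performs no splitting at the turning point $vf(x_1)=E$ and no tail estimate; instead it invokes the identity (\ref{expE}), $E=vf(x)+\sqrt{m^2-\varphi''(x)/\varphi(x)}$, taken from Ref.~\cite{HH}, which gives $vf(x)-E=-\sqrt{m^2-\varphi''/\varphi}\le 0$ \emph{pointwise for every} $x$ and every $E\in(-m,m)$. With that single fact the term $2(vf-E)\varphi\varphi'$ is a product of two non-positive factors, hence non-negative everywhere, $I\le 0$ follows with no case distinction, and the theorem is proved. Your proposed substitute --- bounding $\int_{x_1}^{\infty}\mu\,\bigl[(vf-E)\varphi^2\bigr]'\,dx$ by the ``manifestly favourable'' inner contribution --- is not carried out, and nothing in your sketch obviously delivers it: $\mu$ is only assumed non-negative and in the intended applications tends to a nonzero constant at infinity, so the outer integral is not forced to be small by the decay of $f_2-f_1$, and the comparison you need is quantitative with no inequality supplied. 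As written, your proof is complete only for $E\ge 0$, which is precisely the regime already covered by the earlier results the paper aims to surpass. Your instinct that $vf-E$ looks sign-indefinite in the outer region when $E<0$ is the right thing to worry about; the entire content of the paper's step is that the eigen-equation itself, through (\ref{expE}), settles that sign, and that identity is the one ingredient your proposal is missing.
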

\begin{proof}
Applying integration by parts to (\ref{I}) we get
\begin{eqnarray*}
I = (vf(x) - E)\varphi^2(x)\mu(x)\bigg|_0^{\infty} - \displaystyle\int_0^{\infty}\big[vf^{\prime}(x)\varphi^2(x) + 2(vf(x) - E)\varphi(x)\varphi^{\prime}(x)\big]\mu(x) dx.
\end{eqnarray*}
Regarding that $\displaystyle\lim_{x\to\infty}\varphi(x) = 0$ and $\mu(0) =0$, we get
\begin{eqnarray*}
I =  - \displaystyle\int_0^{\infty}\big[vf^{\prime}(x)\varphi^2(x) + 2(vf(x) - E)\varphi(x)\varphi^{\prime}(x)\big]\mu(x) dx.
\end{eqnarray*}
Since \cite{HH}
\begin{eqnarray}\label{expE}
 E = vf(x) + \sqrt{m^2 - \dfrac{\varphi^{\prime\prime}(x)}{\varphi(x)}},
\end{eqnarray}
 $\varphi^{\prime}(x) \leq 0$, and
\begin{eqnarray*}
f^{\prime}(x) = \dfrac{\partial f}{\partial x} = (1 - a)f_1^{\prime}(x) + af_2^{\prime}(x) \geq 0,
\end{eqnarray*}
then, $I\leq 0$.\\
Therefore, following { \bf Lemma II.1.} $v_a\geq 0$, and the theorem is proven.
\end{proof}
\noindent This theorem allows us to say that graphs of $f_1$ and $f_2$ cross over in such a way that preserves the positivity of $\mu (x)$, then the corresponding coupling constants are ordered as $v_1\leq v_2$ for each $E\in (-m,\,m)$.\\

\noindent We now state a stronger version of the above theorem, which can be applied in case we know one of the ground states $\varphi_1$ or $\varphi_2$:
\begin{thm}
For any potentials $f_1$, $f_2\in\mathcal{P}$ we have:
\begin{eqnarray*}
\rho(x) = \displaystyle\int_0^x[f_2(t) - f_1(t)]\varphi_j(t)dt\geq 0\quad x\in[0,\infty)\implies G_1(E)\leq G_2(E),
\end{eqnarray*}
for $j = 1, 2$ and for any ground state energy $E$.
\end{thm}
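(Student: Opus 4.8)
The plan is to mirror the proof of Theorem II.1, but to carry out the integration by parts in (\ref{I}) by grouping the factor $(f_2-f_1)\varphi$ together rather than $(f_2-f_1)$ alone, so that the antiderivative which appears is $\rho$ instead of $\mu$. Everything again reduces to controlling the sign of $I$: relation (\ref{rel1}) gives $v_a = vI/D$ with denominator $D = E\int_0^\infty f\varphi^2\,dx - v\int_0^\infty f^2\varphi^2\,dx$, and Lemma II.1 (inequality (\ref{mineq})) guarantees $D<0$ for every $|E|<m$. Since $v>0$, showing $I\leq 0$ along the homotopy yields $v_a\geq 0$, hence $v_1=v(0)\leq v(1)=v_2$, i.e. $G_1(E)\leq G_2(E)$.

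First I would write $I=\int_0^\infty\big[(vf-E)\varphi\big]\,\big[(f_2-f_1)\varphi\big]\,dx$ and integrate by parts with $d\rho=(f_2-f_1)\varphi\,dx$. The boundary term $(vf-E)\varphi\,\rho\big|_0^\infty$ vanishes because $\rho(0)=0$ and $\varphi(x)\to 0$ with $f\to 0$ at infinity (the product $(f_2-f_1)\varphi$ being integrable, $\rho$ stays bounded), leaving
\[
I = -\int_0^\infty\big[vf'\varphi + (vf-E)\varphi'\big]\,\rho(x)\,dx.
\]
The signs are then read off exactly as in Theorem II.1: $f'=(1-a)f_1'+af_2'\geq 0$ together with $v,\varphi\geq 0$ gives $vf'\varphi\geq 0$; the representation (\ref{expE}) gives $vf-E=-\sqrt{m^2-\varphi''/\varphi}\leq 0$, while Lemma II.2 gives $\varphi'\leq 0$, so $(vf-E)\varphi'\geq 0$. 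The bracket is therefore non-negative, and $\rho(x)\geq 0$ forces $I\leq 0$, which by Lemma II.1 closes the chain to $v_a\geq 0$.

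The hard part will be that the eigenfunction entering $\rho$ in this argument is the interpolated ground state $\varphi(a,\cdot)$, not the endpoint state $\varphi_j$ named in the statement. Thus the step that genuinely needs care is to deduce $\int_0^x(f_2-f_1)\varphi(a,t)\,dt\geq 0$ for \emph{every} $a\in[0,1]$ from the single hypothesis $\rho(x)=\int_0^x(f_2-f_1)\varphi_j(t)\,dt\geq 0$ at one endpoint. My tool here is the elementary fact that a non-negative non-increasing weight preserves the sign of a running integral: if $\sigma(x)=\int_0^x(f_2-f_1)\varphi_j\geq 0$ and $w=\varphi(a,\cdot)/\varphi_j$ is non-negative and non-increasing, then integration by parts gives $\int_0^x(f_2-f_1)\varphi(a,\cdot)=w(x)\sigma(x)-\int_0^x\sigma w'\geq 0$, since $w'\leq 0$ and $\sigma\geq 0$.

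Consequently the proof is complete once one knows that the ratio $\varphi(a,x)/\varphi_j(x)$ is monotone non-increasing in $x$ — a Sturm-type comparison property of the ground states of the operators $K$ in (\ref{op}) along the homotopy. I expect establishing this monotonicity of the eigenfunction ratio (equivalently, upgrading the hypothesis from a single endpoint to the whole path) to be the main obstacle. Should that monotonicity be available only for the particular reference state one happens to know, the fall-back is to impose the hypothesis along the path and read $\varphi_j$ as the interpolated ground state, in which case the integration-by-parts argument above is already self-contained and yields the stated implication for $j=1$ and $j=2$.
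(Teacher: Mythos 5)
You have correctly assembled the machinery (the sign of the denominator via Lemma~II.1, the sign of the bracket $vf'\varphi+(vf-E)\varphi'$ via (\ref{expE}) and Lemma~II.2), but the gap you flag in your last two paragraphs is real and is not resolved by your proposal. Your route runs through relation (\ref{rel1}), whose numerator is $I=\int_0^\infty(f_2-f_1)(vf-E)\varphi^2\,dx$ with $\varphi=\varphi(a,\cdot)$ the \emph{interpolated} ground state, so your integration by parts produces the running integral of $(f_2-f_1)\varphi(a,\cdot)$, not the $\rho$ of the hypothesis. The bridge you propose --- that $\varphi(a,x)/\varphi_j(x)$ is non-increasing in $x$ --- is left unproved, and it is not an innocent step: since $f_1$ and $f_2$ are allowed to cross, $f(a,\cdot)=f_1+a(f_2-f_1)$ neither dominates nor is dominated by $f_j$ pointwise, and the couplings $v(a)$ and $v_j$ differ, so no off-the-shelf Sturm comparison supplies the sign of $(\varphi(a,\cdot)/\varphi_j)'$. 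Your fall-back (imposing the hypothesis along the whole homotopy) proves a strictly weaker statement than the theorem, since the interpolated states are precisely the objects one does not know.

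The paper sidesteps this at the very first step by choosing a different test function rather than by proving a ratio lemma. Instead of pairing the differentiated eigen-equation $K_a\varphi+K\varphi_a=(E^2-m^2)\varphi_a$ with $\varphi$ itself (which yields (\ref{rel1})), it pairs it with the fixed endpoint state $\varphi_1$ and invokes the symmetry of $K$ to cancel the $\varphi_a$ terms, obtaining $\langle\varphi_1,K_a\varphi\rangle=0$ and hence the formula (\ref{va}), $v_a=v\langle\varphi_1,(f_2-f_1)(vf-E)\varphi\rangle/\langle\varphi_1,f(E-vf)\varphi\rangle$. The numerator now contains the product $(f_2-f_1)\varphi_1$, so integrating by parts with $d\rho=(f_2-f_1)\varphi_1\,dx$ produces exactly the $\rho$ of the hypothesis, while the remaining factor $(vf-E)\varphi$ and its derivative are controlled just as in your second paragraph; the denominator is negative by (\ref{expE}) since $f\le 0$ and $E-vf\ge 0$. (The cancellation step does use $K\varphi_1=(E^2-m^2)\varphi_1$ for the $a$-dependent operator $K$, an identity that holds literally only at the endpoint $a=0$, so the delicacy is relocated rather than eliminated; but this off-diagonal pairing is the intended mechanism, and it is what makes a hypothesis phrased in terms of $\varphi_j$ alone usable.) To complete your write-up in the paper's spirit, replace (\ref{rel1}) by this $\varphi_1$-weighted identity; the ratio-monotonicity lemma is then not needed.
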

\begin{proof}
Suppose, {\it w.l.g}, that $j = 1$. Applying the operator $\dfrac{\partial}{\partial a}$ to the expression
\begin{eqnarray}\label{K}
 K\varphi = (E^2 - m^2)\varphi
\end{eqnarray}
we get
\begin{eqnarray}\label{exp1}
K_a\varphi +K\varphi_a = (E^2 - m^2)\varphi_a,
\end{eqnarray}
where $\varphi_a = \dfrac{\partial\varphi}{\partial a}$.\\
We then multiply (\ref{exp1}) by $\varphi_1$ and apply the inner product to get
\begin{eqnarray*}
\langle\varphi_1,K_a\varphi\rangle =- \langle\varphi_1, K\varphi_a\rangle + \langle(E^2 - m^2) \varphi_1, \varphi_a\rangle,
\end{eqnarray*}
which implies that
\begin{eqnarray}\label{dff}
\langle\varphi_1,K_a\varphi\rangle =- \langle\varphi_1, K\varphi_a\rangle + \langle K\varphi_1, \varphi_a\rangle.
\end{eqnarray}
Since $K$ is symmetric, then $\langle\varphi_1,K\varphi_a\rangle = \langle K\varphi_1,\varphi_a\rangle$. Then relation (\ref{dff}) becomes
\begin{eqnarray}\label{exp2}
\langle\varphi_1,K_a\varphi\rangle = 0.
\end{eqnarray}
Using (\ref{Ka}) we obtain
\begin{eqnarray}\label{va}
v_a = \dfrac{v\left\langle\varphi_1, (f_2 - f_1)(vf - E)\varphi\right\rangle}{\left\langle\varphi_1, f(E - vf)\varphi\right\rangle}.
\end{eqnarray}
Using (\ref{expE}) we
 observe that the denominator of (\ref{va}) is negative. Applying integration by parts to the numerator changes it into
\begin{eqnarray}\label{num}
\big(vf(x) - E\big)\varphi(x)\rho(x)\bigg|_0^{\infty} - \displaystyle\int_0^{\infty}\big[vf^{\prime}(x)\varphi(x) + \big(vf(x) - E\big)\varphi^{\prime}(x)\big]\rho(x)dx.
\end{eqnarray}
 Since $\displaystyle\lim_{x\to\infty}(x) = 0$ and $\rho(0) = 0$ then (\ref{num}) becomes
\begin{eqnarray*}
- \displaystyle\int_0^{\infty}\bigg[vf^{\prime}(x)\varphi(x) + \big(vf(x) - E\big)\varphi^{\prime}(x)\bigg]\rho(x)dx\leq 0.
\end{eqnarray*}
Therefore $v_a\geq 0$ and the proof is complete.
\end{proof}

\subsection{$d$--dimensional cases ($d\geq 2$)}
The Klein--Gordon equation in $d$ dimensions is given by 
\begin{eqnarray}\label{KGPSI}
\Delta_d\Psi (r)=[m^2-(E- V(r))^2]\Psi (r),
\end{eqnarray}
where natural units $\hbar=c=1$ are used and $E$ is the discrete energy eigenvalue of a spinless particle of mass $m$.
We suppose here that the vector potential function $V(r)$, $r=||\bf{r}||$, is a radially-symmetric Lorentz vector potential (the time component of a space-time vector), which belongs to the class $\mathcal{P}_d$ with the following properties:
\begin{enumerate}
\item  $V(r) = v f(r), r\in\mathbb[0,\infty)$, where $v>0$ is the coupling parameter and $f(r)$ is the potential shape;
\item  $f$ is not identically zero and non-positive;
\item  $f$ is attractive, that is $f$ is monotone non-decreasing over $[0,\infty)$;
\item $f$ is not more singular than $r^{-(d - 1)}$, $r\in [0,\infty)$, that is $\displaystyle\lim_{r\to 0}r^{(d - 2)}f(r) = A,\quad -\infty < A \leq 0$;
\item  $f$ vanishes at infinity, i.e $\displaystyle\lim_{r\to \infty} f(r)= 0$.
\end{enumerate}
This is a wider potential class than $\mathcal{P}$, since it contains Coulomb and Coulomb--like potentials, such as the Yukawa and the Hulth\'en potentials. The operator $\Delta_d$ is the $d$-dimensional Laplacian. Hence, the wave function for $d > 1$ can be expressed as $\Psi (r) =R(r) Y_{l_{d-1,...,l_1}} (\theta_1, \theta_2,...,\theta_{d-1})$, where $R\in L^2 (\mathbb{R}^d)$ is a radial function and $Y_{l_{d-1,...,l_1}}$ is a normalized hyper-spherical harmonic with eigenvalues $l(l+d-1)$, $l = 0, 1, 2, ...$ \cite{rtd}
The radial part of the above Klein--Gordon equation can be written as:
\begin{eqnarray*}
\frac{1}{r^{d-1}}\frac{\partial}{\partial r}\bigg(r^{d-1}\frac{\partial}{\partial r}R(r)\bigg) = \bigg[m^2-\big(E-V(r)\big)^2+\frac{l(l+d-2)}{r^2}\bigg]R(r),
\end{eqnarray*}

where $R$ satisfies the second-order linear differential equation
\begin{eqnarray}\label{KGD}
R^{\prime\prime}(r) +\frac{d-1}{r} R^{\prime} (r)= \bigg[m^2-\big(E-V(r)\big)^2+\frac{l(l+d-2)}{r^2}\bigg]R(r).
\end{eqnarray}

Since $V$ vanishes at $\infty$, equation (\ref{KGD}) becomes 
\begin{eqnarray*}
\varphi^{\prime\prime}=(m^2-E^2)\varphi
\end{eqnarray*}
near infinity, which means that $|E| <  m$.
 The normalization condition for bound states is
\begin{eqnarray}\label{nrm}
\int_{0}^{\infty}R^2 (r) r^{d-1}dr= 1.
\end{eqnarray}
Differentiating (\ref{nrm}) with respect to $a$ we obtain the orthogonality relation $\langle R_a, R\rangle = \langle R, R_a\rangle = 0$.
We also define $f(r,a) = a\,f_1(r) + (1 - a)f_2(r)$, $f_1, f_2\in\mathcal{P}_d$, and we consider the operator 
\begin{eqnarray}\label{op2}
K=-\frac{\partial^2}{\partial r^2}-\dfrac{\partial}{\partial r}+2Evf-v^2f^2.
\end{eqnarray}
By the same reasoning for the one-dimensional case we obtain the relation
\begin{eqnarray*}
v_a = \dfrac{v I}{E\langle f\rangle - v\langle f^2\rangle},
\end{eqnarray*}
where
\begin{eqnarray}\label{I2}
I =\displaystyle\int_0^{\infty}\bigg(f_2(r) - f_1(r)\bigg) \bigg(vf(r) - E\bigg) r^{(d-1)}R^2(r) dr, 
\end{eqnarray}
$\langle f \rangle = \displaystyle\int_0^{\infty}f(r) R^2(r)  r^{d-1} dr$, and $\langle f^2 \rangle = \displaystyle\int_0^{\infty}f^2(r) R^2(r)  r^{d-1} dr$.\\
\medskip
Using \cite{HH}
\begin{eqnarray}\label{ET}
E = vf(r) + \sqrt{m^2 - \frac{R^{\prime\prime}(r)}{R(r)} - \frac{R^\prime(r)}{rR(r)}},
\end{eqnarray}
we get
\begin{eqnarray}\label{rel2}
v_a = \dfrac{vI}{\displaystyle\int_0^{\infty}\bigg[\sqrt{m^2 - \dfrac{R^{\prime\prime}(r)}{R(r)} - \dfrac{R^\prime(r)}{rR(r)}}R^2(r)\bigg]r^{d-1}f(r)dr}.
\end{eqnarray}
\begin{lemma}
The ground state eigenfunction of the Klein--Gordon equation is non-increasing for $r\in[0,\infty)$ and $|E| < m$.
\end{lemma}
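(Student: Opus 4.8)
The plan is to imitate the one-dimensional argument of \textbf{Lemma II.2}, but to replace the bare first derivative $R'$ by the flux-like quantity $u(r)=r^{d-1}R'(r)$, which is the object naturally adapted to the radial operator in (\ref{KGD}). First I would restrict to the ground state, for which $l=0$, so that the centrifugal term drops out and the radial equation reads $R''(r)+\frac{d-1}{r}R'(r)=g(r)R(r)$ with $g(r):=m^2-(E-vf(r))^2$; equivalently, equation (\ref{ET}) from \cite{HH} gives $\frac{R''}{R}+\frac{R'}{rR}=g(r)$. Writing this in divergence form yields $\big(r^{d-1}R'(r)\big)'=r^{d-1}g(r)R(r)$, i.e. $u'(r)=r^{d-1}g(r)R(r)$, and since $d\geq 2$ one has $u(0)=0$ whether or not $R'(0)$ vanishes — this accommodates the Coulomb-like members of $\mathcal{P}_d$ for which $R'(0)<0$ is finite. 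Because the ground state is nodeless we take $R(r)>0$ throughout, so the sign of $u'$ coincides with the sign of $g$.

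Next I would locate the single turning point. Since $f$ is non-positive and monotone non-decreasing, $h(r):=E-vf(r)$ is positive, monotone non-increasing, and satisfies $h(r)>-m$ with $h(\infty)=E$; hence in $g=(m-h)(m+h)$ the factor $m+h$ is positive everywhere while $m-h$ changes sign exactly once. Thus $g<0$ on $[0,r_0)$ and $g>0$ on $(r_0,\infty)$, where $r_0=V^{-1}(E-m)$ is the same turning point used in \cite{HH} for the concave/convex splitting. Consequently $u$ is strictly decreasing on $[0,r_0)$ and strictly increasing on $(r_0,\infty)$. Starting from $u(0)=0$, the decreasing phase forces $u(r)<0$ on $(0,r_0]$, so $R'(r)=u(r)/r^{d-1}<0$ there.

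For the outer region I would invoke the asymptotic behaviour already recorded in the excerpt: near infinity $R$ solves $R''=(m^2-E^2)R$ with $m^2-E^2>0$, so $R$ decays through the admissible exponential, $R'(r)<0$, and the polynomial prefactor $r^{d-1}$ is overwhelmed, giving $u(r)=r^{d-1}R'(r)\to 0^-$ as $r\to\infty$. Since $u$ is strictly increasing on $(r_0,\infty)$ and tends to $0$ from below, it must stay strictly negative on that interval. Combining the two regions gives $u(r)<0$ for all $r>0$, whence $R'(r)\leq 0$ on $[0,\infty)$ and $R$ is non-increasing, completing the proof.

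The step I expect to be the main obstacle is the control of $u$ in the outer region, namely ruling out that $u$ overshoots into positive values. The monotonicity-plus-boundary-value argument settles this cleanly, but it rests on the asymptotic sign of $R'$, so the delicate point is to make the limit $u(r)\to 0^-$ rigorous (including the effect of the $\frac{d-1}{r}R'$ term on the decay) and to confirm that the turning-point picture persists for the possibly singular Coulomb-like potentials permitted by condition~4.
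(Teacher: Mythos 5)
Your argument is essentially the paper's own proof: the paper rewrites the $l=0$ radial equation in exactly the integral form $r^{d-1}R'(r)=\int_0^r F(t)R(t)t^{d-1}\,dt$ (so your $u(r)$ is precisely that integral with $u(0)=0$), shows $F$ is monotone non-decreasing and hence changes sign once at the turning point, and concludes $R'\leq 0$ from the vanishing of this quantity at $0$ and at infinity --- you merely spell out the endpoint step that the paper defers to Ref.~\cite{HP-16}, and replace the computation $F'=2vf'(E-vf)\geq 0$ by the equivalent factorization $F=(m-h)(m+h)$. One harmless slip: your claim that $h=E-vf$ is positive fails for $E<0$ (indeed $h(\infty)=E$), but your argument only uses $m+h\geq m+E>0$ together with the monotonicity of $h$, so nothing is affected.
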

\begin{proof}
For $l = 0$, equation (\ref{KGD}) can be written as
\begin{eqnarray}\label{var}
R^{\prime}(r) = r^{-(d-1)}\displaystyle\int_0^rF(t)R(t)t^{d - 1}dt
\end{eqnarray}
where
\begin{eqnarray*}
F(t) = m^2 - \big(E - vf(t)\big)^2.
\end{eqnarray*}
Replacing $E$ by the expression (\ref{ET}) and using this in $F^{\prime}(t) = \dfrac{dG}{dt}$ we find
\begin{eqnarray*}
F^{\prime}(t) = 2vf^{\prime}(t)\big(E - vf(t)\big) = 2\bigg[\sqrt{m^2 - \frac{R^{\prime\prime}(t)}{R(t)} - \frac{R^\prime(t)}{tR(t)}}\bigg]vf^{\prime}(t) \geq 0.
\end{eqnarray*}
Thus we have reached the same result as in \cite{HP-16}, but extended to $|E| < m$. Hence, $R^{\prime}(r)\leq 0$ for all $r\in[0,\infty)$ and $|E| < m$.
\end{proof}
\begin{thm}
If $f_1$, $f_2\in \mathcal{P}_d$ such that $(f_2 - f_1)$ has $t^{d-1}$-weighted area, then:
\begin{eqnarray}\label{new1}
\eta(r) = \displaystyle\int_0^r[f_2(t) - f_1(t)]t^{d-1}dt\geq 0\quad r\in[0,\infty)\implies G_1(E)\leq G_2(E),
\end{eqnarray}
where $E$ is the ground state energy.
\end{thm}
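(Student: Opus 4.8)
The plan is to transcribe the proof of \textbf{Theorem II.1} to the radial setting, the only structural change being that the weight $r^{d-1}$ is now carried entirely inside $\eta$. I would start from the expression (\ref{rel2}) for $v_a$ and first fix the sign of its denominator. Rewriting that denominator as $\int_0^{\infty} f(r)\big(E-vf(r)\big)R^2(r)\,r^{d-1}dr$ and using the representation (\ref{ET}), the factor $E-vf=\sqrt{m^2-R''/R-R'/(rR)}$ is non-negative, while $f\le 0$ and $f\not\equiv 0$ by the definition of $\mathcal{P}_d$; hence the denominator is strictly negative. Since $v>0$, the assertion $G_1(E)\le G_2(E)$ is equivalent to $v_a\ge 0$ along the deformation for which $\partial f/\partial a=f_2-f_1\ge 0$ (so that $v$ runs from $v_1=G_1(E)$ to $v_2=G_2(E)$), and this in turn reduces to showing $I\le 0$.

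To bound $I$ in (\ref{I2}) I would observe that $\eta^{\prime}(r)=[f_2(r)-f_1(r)]\,r^{d-1}$, so that $I=\int_0^{\infty}\big(vf(r)-E\big)R^2(r)\,\eta^{\prime}(r)\,dr$, and then integrate by parts:
\begin{eqnarray*}
I=\big(vf(r)-E\big)R^2(r)\,\eta(r)\bigg|_0^{\infty}-\displaystyle\int_0^{\infty}\big[vf^{\prime}(r)R^2(r)+2\big(vf(r)-E\big)R(r)R^{\prime}(r)\big]\eta(r)\,dr.
\end{eqnarray*}
The boundary contribution is expected to vanish: at $r=0$ because $\eta(0)=0$, and at $r=\infty$ because $R$ decays exponentially (from $|E|<m$) while $vf-E$ stays bounded and $\eta$ grows at most polynomially. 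Once this is secured, only the integral survives.

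It then remains to sign the surviving integrand termwise. Collecting the available facts — $f^{\prime}\ge 0$ (attractivity), $vf-E=-\sqrt{m^2-R''/R-R'/(rR)}\le 0$ from (\ref{ET}), $R\ge 0$ for the nodeless ground state, $R^{\prime}\le 0$ by the just-established \textbf{Lemma II.3}, and $\eta\ge 0$ by hypothesis — each of the two pieces $vf^{\prime}R^2$ and $2\big(vf-E\big)RR^{\prime}$ is non-negative, so the integral is non-negative and $I\le 0$. Combined with the negativity of the denominator, this yields $v_a\ge 0$, i.e.\ the coupling is non-decreasing as $f$ is deformed from $f_1$ to $f_2$, whence $v_1\le v_2$ and $G_1(E)\le G_2(E)$.

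The step I expect to demand the most care is the vanishing of the boundary term at the origin, since $\mathcal{P}_d$ admits the Coulomb-like singularity permitted by condition $4$, $f\sim A\,r^{-(d-2)}$. There $vf-E$ blows up like $r^{-(d-2)}$, but $\eta(r)=\int_0^{r}(f_2-f_1)t^{d-1}dt$ behaves like $c\,r^2$ near $0$ because the singular parts of $f_2-f_1$ are integrated against $t^{d-1}$; balanced against the behavior of $R^2$ at the origin this drives the boundary term to zero in the low-dimensional range, and in general one must invoke condition $4$ carefully to justify the integration by parts. Verifying this decay uniformly in the admissible dimensions is the one genuinely new ingredient relative to the one-dimensional proof; the sign analysis itself is identical after the replacements $\varphi\to R$ and $\mu\to\eta$.
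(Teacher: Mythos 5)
Your proposal is correct and follows essentially the same route as the paper: integrate $I$ in (\ref{I2}) by parts against $\eta$, kill the boundary term via $\eta(0)=0$ and the decay of $R$, sign the surviving integrand termwise using $f'\geq 0$, $vf-E\leq 0$ from (\ref{ET}), $R\geq 0$, $R'\leq 0$ (Lemma II.3) and $\eta\geq 0$, and conclude $v_a\geq 0$ from (\ref{rel2}). Your extra scrutiny of the boundary term at the origin for Coulomb-like singularities is a point the paper passes over silently, but it does not change the argument.
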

\begin{proof}
Integrating (\ref{I2}) by parts we get
\begin{eqnarray*}
I = (vf(r) - E)R^2(r)\eta(r)\bigg|_0^{\infty} - \displaystyle\int_0^{\infty}\bigg[vf^{\prime}(r)R^2(r) + 2\big(vf(r) - E\big)R(r)R^{\prime}(r)\bigg]\eta(r) dr.
\end{eqnarray*}
Using $\displaystyle\lim_{x\to\infty}R(r) = 0$ and $\eta(0) = 0$ we obtain
\begin{eqnarray*}
I = - \displaystyle\int_0^{\infty}\bigg[vf^{\prime}(r)R^2(r) + 2\big(vf(r) - E\big)R(r)R^{\prime}(r)\bigg]\eta(r) dr.
\end{eqnarray*}
Hence, relation (\ref{rel2}) is non-negative and the theorem is proved.
\end{proof}
\noindent As in the one-dimensional case, we state a stronger version of the previous refining theorem:
\begin{thm}
For any two potentials $f_1, f_2\in\mathcal{P}_d$ we have:
\begin{eqnarray}\label{new2}
\sigma(r) = \displaystyle\int_0^r[f_2(t) - f_1(t)]t^{d-1}R_j(t)dt\geq 0\quad r\in[0,\infty)\implies G_1(E)\leq G_2(E),
\end{eqnarray}
for $j = 1, 2$, where $E$ is the ground state energy $E$.
\end{thm}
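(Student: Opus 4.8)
The plan is to transcribe the proof of the one-dimensional refined theorem (the $\rho$--version, Theorem~II.2) into the radial setting, carrying the weight $r^{d-1}$ through every integration and using the \emph{known} state $R_1$ as a fixed test function. Assume without loss of generality that $j=1$ and work at the endpoint $a=0$ of the homotopy $f(r,a)=f_1(r)+a[f_2(r)-f_1(r)]$ (so that $\partial f/\partial a=f_2-f_1\geq 0$, as in the one-dimensional case), where $f=f_1$, $v=v_1=G_1(E)$ and $R=R_1$. First I would differentiate the radial eigen-equation $KR=(E^2-m^2)R$, with $K$ the operator in (\ref{op2}), with respect to $a$ to get $K_aR+KR_a=(E^2-m^2)R_a$, and then take the weighted inner product $\langle g,h\rangle=\int_0^\infty g(r)h(r)r^{d-1}dr$ against $R_1$. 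Invoking the symmetry of $K$ for this weighted product together with the eigen-relation $KR_1=(E^2-m^2)R_1$, which holds precisely at $a=0$, the two terms carrying $R_a$ cancel and I arrive at the radial analogue of (\ref{exp2}), namely $\langle R_1,K_aR\rangle=0$.

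Next I would insert the radial form of $K_a$ from (\ref{Ka}) and solve this linear relation for $v_a$, obtaining the exact counterpart of (\ref{va}),
\begin{eqnarray*}
v_a=\frac{v\langle R_1,(f_2-f_1)(vf-E)R\rangle}{\langle R_1,f(E-vf)R\rangle}.
\end{eqnarray*}
The denominator is handled by (\ref{ET}): since $E-vf=\sqrt{m^2-R^{\prime\prime}/R-R^{\prime}/(rR)}\geq 0$ while $f\leq 0$ and $R_1\geq 0$, the integrand $f(E-vf)R_1^2 r^{d-1}$ is non-positive and the denominator is strictly negative. For the numerator I would integrate by parts exactly as in (\ref{num}), grouping $d\sigma=(f_2-f_1)R_1 r^{d-1}dr$ so that its antiderivative is the very function $\sigma(r)$ of (\ref{new2}), against $u=(vf-E)R_1$. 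The surviving bulk integral is $-\int_0^\infty[vf^{\prime}R_1+(vf-E)R_1^{\prime}]\sigma\,dr$; here $f^{\prime}\geq 0$ (attractiveness), $R_1\geq 0$ and $v>0$ render the first summand non-negative, while $(vf-E)\leq 0$ from (\ref{ET}) and $R_1^{\prime}\leq 0$ from Lemma~II.3 render the second summand non-negative. Multiplying by $\sigma\geq 0$ (the hypothesis) and restoring the overall sign shows the numerator is non-positive; hence $v_a\geq 0$, and therefore $v_1\leq v_2$, i.e. $G_1(E)\leq G_2(E)$.

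The step I expect to be the main obstacle is the vanishing of the boundary term $[(vf-E)R_1\sigma]_0^\infty$, and in particular its behaviour at the origin, which has no one-dimensional analogue. At infinity $R_1\to 0$ disposes of the upper endpoint and $\sigma(0)=0$ formally kills the lower one; but because $\mathcal{P}_d$ admits potentials as singular as $f\sim A\,r^{-(d-2)}$ (condition~4), one must verify that $(vf-E)R_1\sigma\to 0$ as $r\to 0$ rather than merely quote $\sigma(0)=0$. With $f\sim A\,r^{2-d}$ and $R_1(0)$ finite one finds $\sigma(r)=O(r^2)$ and $(vf-E)R_1=O(r^{2-d})$, so the product is $O(r^{4-d})$, which vanishes in the Coulomb-type cases that motivate $\mathcal{P}_d$ and must otherwise be checked against the singularity exponent; the same estimate is what legitimises the boundary-free symmetry identity $\langle R_1,KR_a\rangle=\langle KR_1,R_a\rangle$ under the weight $r^{d-1}$. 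Once these endpoint estimates are in hand the argument closes precisely as in Theorem~II.2, and the case $j=2$ follows identically by evaluating at the endpoint $a=1$ with the test function $R_2$.
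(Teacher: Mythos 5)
Your argument is essentially the paper's own proof: the authors likewise differentiate $KR=(E^2-m^2)R$ in $a$, pair with $R_j$ under the $r^{d-1}$-weighted inner product to get $\langle R_1,K_aR\rangle=0$, arrive at the same quotient (\ref{vad}), and integrate the numerator by parts against $\sigma$ using $f^{\prime}\geq 0$, $R^{\prime}\leq 0$, $vf-E\leq 0$ from (\ref{ET}), and the negativity of the denominator. Your extra attention to the boundary term at $r=0$ for singular $f\in\mathcal{P}_d$ goes beyond what the paper records (it simply cites $\sigma(0)=0$ and $R\to 0$), but it does not change the route.
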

\begin{proof}
In the same manner of the proof of the one-dimensional theorem we arrive to the following formula
\begin{eqnarray}\label{vad}
v_a = \dfrac{v\langle R_1, (f_2 - f_1)(vf - E)R\rangle}{\langle R_1, f(E - vf) R\rangle},
\end{eqnarray}
which is equal to
\begin{eqnarray}\label{vad1}
\dfrac{- \displaystyle\int_0^{\infty}\bigg[vf^{\prime}(r)R(r) + \big(vf(r) - E\big)R^{\prime}(r)\bigg]\sigma(r)dr}{\displaystyle\int_0^{\infty}R_1(r)\bigg[\sqrt{m^2 -\dfrac{R^{\prime\prime}(r)}{R(r)}-\dfrac{R^{\prime}(r)}{rR(r)}}\bigg]R(r)r^{d-1}f(r)dr}\geq 0.
\end{eqnarray}
Hence we have reached our desired result.
\end{proof}
\subsection{Sign of Coulomb-like energy eigenvalues in dimension $d\geq 3$}
In this section we study the sign of the energy eigenvalues of a certain class of Coulomb-like potentials in dimension $d\geq 3$.
\begin{thm}
Let $f\in\mathcal{P}_d$ such that $f(r) = -\dfrac{w(r)}{r}$ with $w(r)$ non-increasing, $w(0)\leq1$, and $\displaystyle\lim_{r\to\infty}w(r) = 0$. Then the corresponding ground state energy $E$ of equation (\ref{KGPSI}) is positive for $v < \dfrac{d-2}{2}$, $d\geq 3$.
\end{thm}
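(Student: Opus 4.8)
The plan is to reduce everything to the $l=0$ radial problem and to read off the sign of $E$ from a single energy identity combined with Hardy's inequality. First I would specialize equation (\ref{KGD}) to the ground state $l=0$, multiply it by $R\,r^{d-1}$, and integrate over $[0,\infty)$. Integrating the kinetic term by parts (the boundary term $r^{d-1}RR'$ vanishing at both ends) reproduces the $d$-dimensional analogue of $\langle K\rangle=E^2-m^2$, namely
\[
T + 2Ev\langle f\rangle - v^2\langle f^2\rangle = E^2 - m^2,
\]
where $T=\int_0^\infty (R')^2 r^{d-1}\,dr>0$, $\langle f\rangle=\int_0^\infty f R^2 r^{d-1}\,dr$ and $\langle f^2\rangle=\int_0^\infty f^2 R^2 r^{d-1}\,dr$, with $R$ normalized as in (\ref{nrm}). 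Substituting $f=-w/r$ turns these into $\langle f\rangle=-\int_0^\infty w R^2 r^{d-2}\,dr<0$ and $\langle f^2\rangle=\int_0^\infty w^2 R^2 r^{d-3}\,dr>0$.

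Next I would introduce the two structural facts that make $v<(d-2)/2$ the natural threshold. Since $w$ is non-increasing with $w(0)\leq 1$, we have $0\leq w(r)\leq 1$, hence $\langle f^2\rangle\leq D$ with $D:=\int_0^\infty R^2 r^{d-3}\,dr$. On the other hand, Hardy's inequality for radial functions in dimension $d\geq 3$ gives the lower bound
\[
T \;\geq\; \Big(\tfrac{d-2}{2}\Big)^2 \int_0^\infty R^2 r^{d-3}\,dr \;=\; \Big(\tfrac{d-2}{2}\Big)^2 D .
\]
The crucial observation is that the Hardy constant $\big(\tfrac{d-2}{2}\big)^2$ is exactly the quantity in the statement, and the bound $w\leq 1$ is precisely what lets the singular Coulomb term $v^2\langle f^2\rangle$ be dominated by $v^2 D$.

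I would then argue by contradiction. Suppose the ground-state energy satisfies $E\leq 0$. Because $\langle f\rangle<0$ and $v>0$, the cross term obeys $2Ev\langle f\rangle\geq 0$, so, the right-hand side $E^2-m^2$ being negative (as $|E|<m$), the identity forces $v^2\langle f^2\rangle> T+2Ev\langle f\rangle\geq T$. Chaining this with the two inequalities above yields $v^2 D\geq v^2\langle f^2\rangle> T\geq\big(\tfrac{d-2}{2}\big)^2 D$, and cancelling $D>0$ gives $v>\tfrac{d-2}{2}$, contradicting the hypothesis. Hence $E>0$.

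The main obstacle I anticipate is not the algebra but the analytic bookkeeping at the origin: I must verify that $D$, $T$, $\langle f\rangle$, and $\langle f^2\rangle$ all converge and that the integration-by-parts boundary term at $r=0$ vanishes. All of these hinge on the indicial behaviour $R\sim r^{s_+}$ with $s_+=\tfrac12\big(-(d-2)+\sqrt{(d-2)^2-4v^2 w(0)^2}\big)$; the assumption $v<(d-2)/2$ together with $w(0)\leq 1$ makes the discriminant positive and forces $s_+>-(d-2)/2$, which is exactly the borderline decay that keeps every integral finite and the Hardy inequality applicable. Confirming this self-consistency—equivalently, that the admissible ground state is the solution regular at the origin—is the step that requires the most care.
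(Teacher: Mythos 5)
Your proposal is correct and takes essentially the same route as the paper: both start from the energy identity obtained by multiplying the Klein--Gordon equation by the wave function and integrating, both invoke Hardy's inequality with constant $(d-2)^2/4$ to bound the kinetic term, and both use $0\le w\le 1$ to dominate $v^2\langle f^2\rangle$ by $v^2\langle 1/r^2\rangle$. The only cosmetic difference is that you argue by contradiction while the paper concludes directly from $-2Ev\langle f\rangle>0$ and $\langle f\rangle<0$.
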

\begin{proof}
Multiplying equation (\ref{KGPSI}) by $\varphi$ and integrating over $[0,\infty)$ we get
\begin{eqnarray}\label{m1}
-2Ev\left\langle f\right\rangle =\left\langle -\Delta\right\rangle + m^2 - E^2 - v^2\left\langle f^2\right\rangle.
\end{eqnarray}
Using the generalized Heisenberg uncertainty relation for dimension $d\geq 3$, \bigg($\langle-\Delta\rangle\geq\left\langle\dfrac{(d-2)^2}{4r^2}\right\rangle$\bigg) (\cite{SI, Hardy1, Hardy2, Hardy3}), we obtain
\begin{eqnarray*}
-2Ev\left\langle f\right\rangle \geq \left\langle\dfrac{(d-2)^2}{4r^2}\right\rangle + m^2 - E^2 - v^2\left\langle f^2\right\rangle.
\end{eqnarray*}
Since $m^2 - E^2 > 0$ for all $E\in(-m,m)$ and $v < \dfrac{d-2}{2}$ then
\begin{eqnarray}\label{m2}
-2Ev\left\langle f\right\rangle > \dfrac{(d-2)^2}{4}\left\langle\dfrac{1}{r^2} - f^2\right\rangle.
\end{eqnarray}
Replacing $f(r)$ by $\dfrac{w(r)}{r}$ and using $d\geq 3$ in (\ref{m2}) we conclude that
\begin{eqnarray*}
-2Ev\left\langle f\right\rangle > \dfrac{1}{4}\left\langle\dfrac{1 - w^2(r)}{r^2}\right\rangle\geq 0.
\end{eqnarray*}
Hence, $E >e_0\geq 0$, for some non-negative real number $e_0$.
\end{proof}

We note here that the earlier refined  comparison theorems for the Klein--Gordon equation \cite{HP-16} required the energy $E$ to be positive. Since this condition is automatically satisfied by the present class of Coulomb--like potentials, the comparison theorems proved in Ref.\cite{HP-16} using the formulation $E = F(v)$ apply to these singular problems without modification.

\section{ Spectral bounds for given potential shapes}
\subsection{General spectral formulas provided by the square-well and the exponential potential.}
In this section we provide a method for finding convenient lower and upper spectral bounds for a bounded potential shape $f(x)$. In the previous work we constructed both lower and upper bounds by fitting suitable square wells.  For the present application we have fitted a square well. We use the square-well potential and the exponential potential as a lower bound and an upper bound respectively. We have discussed a similar idea in our previous paper \cite{HH} based on square-well spectral bounds, but the results were limited by the condition that the graphs cannot cross over. Since we have been able to refine our previous comparison theorem, we can find better bounds now. We have chosen the square-well and the exponential potentials because we know the exact solutions of the Klein--Gordon equation with each of these potentials \cite{Gr, exp}, \cite{HH}. These solutions allow us to find the exact values for the corresponding eigenvalues $v=v(E)$.\\

\noindent Consider an attractive potential $V\in\mathcal{P}$ such that $V(r) = vf(r)$. Let $V_1(r) = v_1f_1(r)$ be the square-well potential such that 
\[ f_1(x) = 
\begin{cases}
           f(0), &  |x| \leq t\\
           0, & {\rm elsewhere}
      \end{cases}
\] 
with
\begin{eqnarray*}
 \int_0^t\big(f(r) - f_1(r)\big)dr> 0,
\end{eqnarray*}
 and  
\begin{eqnarray*}
\int_0^{\infty}\big(f(r) - f_1(r)\big)dr = 0. 
\end{eqnarray*}
We also consider the exponential potential $V_2(r) = v_2f_2(r)$ with $f_2(r) = -e^{-qr}$, $q > 0$, which intersects with $f$ at $r = \alpha$ such that 
\begin{eqnarray*}
\int_0^{\alpha}\big(f_2(r) - f(r)\big)dr > 0,
\end{eqnarray*}
 and
 \begin{eqnarray*}
\int_{0}^{\infty}\big(f_2(r) - f(r)\big)dr = 0.
\end{eqnarray*}
 Hence, for any eigenenergy $E\in(-m,\, m)$ we have $G_L(E)\leq G(E)\leq G_U(E)$ where $G_L, G$, and $G_U$ are the respective graphs of the spectral functions $v_1(E), v(E)$, and $v_U(E)$ respectively.
\newpage
\begin{center}
{\bf Applications}
\end{center}
\begin{enumerate}
\item Let $V(x) = v\,f(x)$ be the Gaussian potential where $f(x) = -e^{-qx^2}$, and $q > 0$ is a range parameter. We want to find a lower and an upper bound for the coupling constant $v$, for any given eigenenergy $E\in(-m, m)$ and $q = -0.8$. We choose the square-well potential $V_1(x) = v_1f_1(x) $ and the exponential potential $V_2(x) = v_2f_2(x)$ with 
\begin{equation*}
\centering
{f_1(x) = \begin{cases}
                    -1, &  |x| \leq \frac{\sqrt{5\pi}}{4}\\
                     0, & {\rm elsewhere}
              \end{cases}} ~~~~{\rm and~~}  f_2(x) =  -e^{-\frac{4}{\sqrt{5 \pi}}|x|}.   
\end{equation*}   
We have $\int_0^{\frac{\sqrt{5\pi}}{4}}\big(f(x) - f_1(x)\big)dx \approx 0.20816$ and 
$\int_0^{\infty}\big(f(x) - f_1(x)\big)dx=0$~(Figure~$1$). On the other hand, $f$ and $f_2$ cross over at $x_0 \approx 1.26$ ( Figure~$2$) with $\int_0^{x_0}\big(f_2(x) - f(x)\big)dx\approx 0.15253$ and $\int_0^{\infty}\big(f_2(x) - f(x)\big)dx = 0$. We fix $E = -0.0377$ and we deduce that $v_1\leq v\leq v_2$ where $v_1 = 1.36$ and $v_2 = 1.9$. We have verified this result numerically by using our own shooting method realized in Maple, and with which we find $v = 1.581$. The graphs of $v_1(E), v(E)$, and $v_2(E)$ are shown in Figure~$3$. References for shooting methods are found in \cite{JNKutz, JDPyrce}.
\begin{figure}
\centering
\includegraphics[scale=0.35]{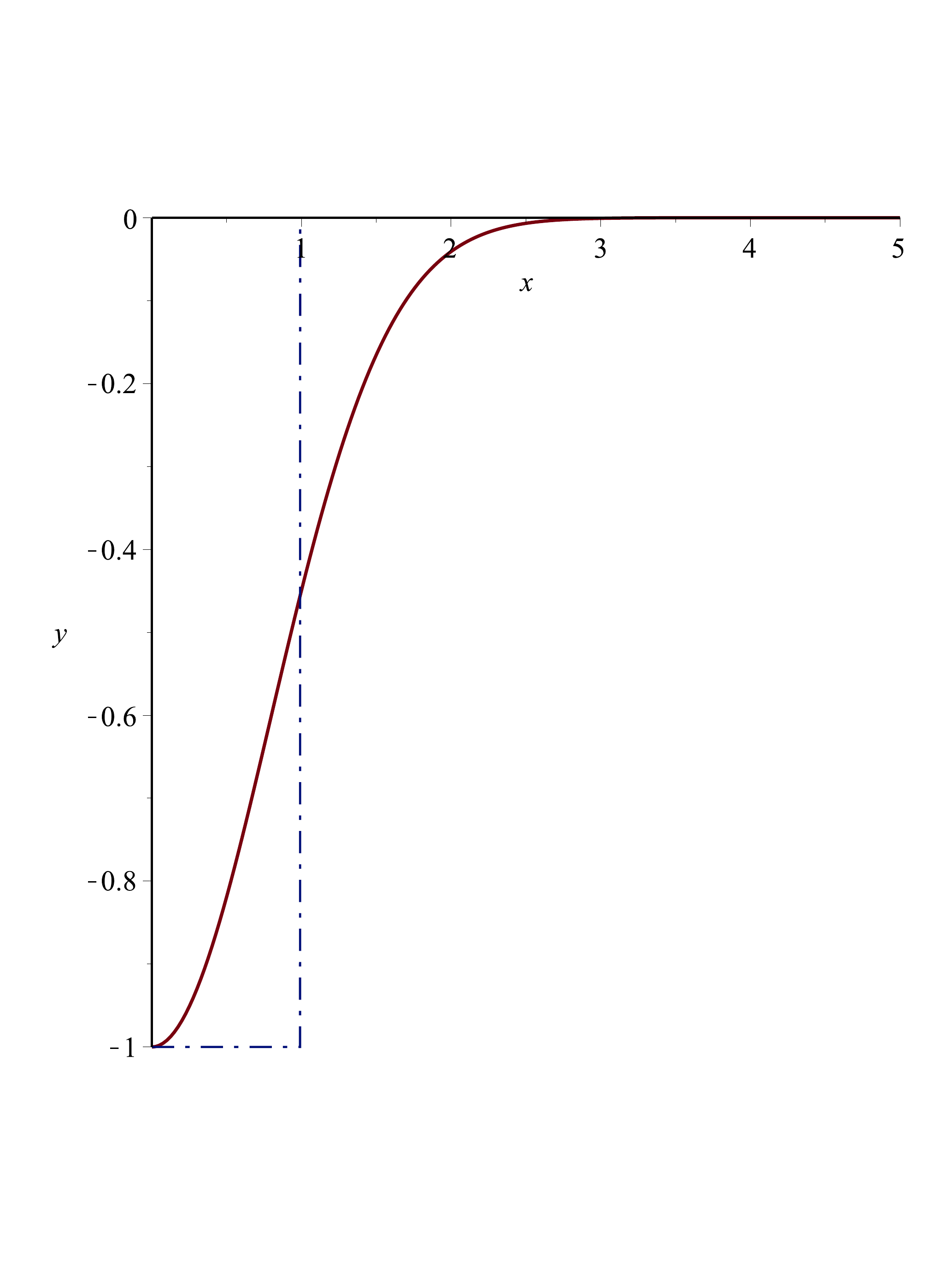}
\caption{Potential Shapes $f_1(x) = -1$ if $|x|\leq\frac{\sqrt{5\pi}}{4}$ and $0$ elsewhere, 
dashed lines and $f(x) = -Ae^{-qx^2}$ full line, where $A =1$ and $q = 0.8$ were applied.}
\end{figure}
\begin{figure}
\centering
\includegraphics[scale=0.35]{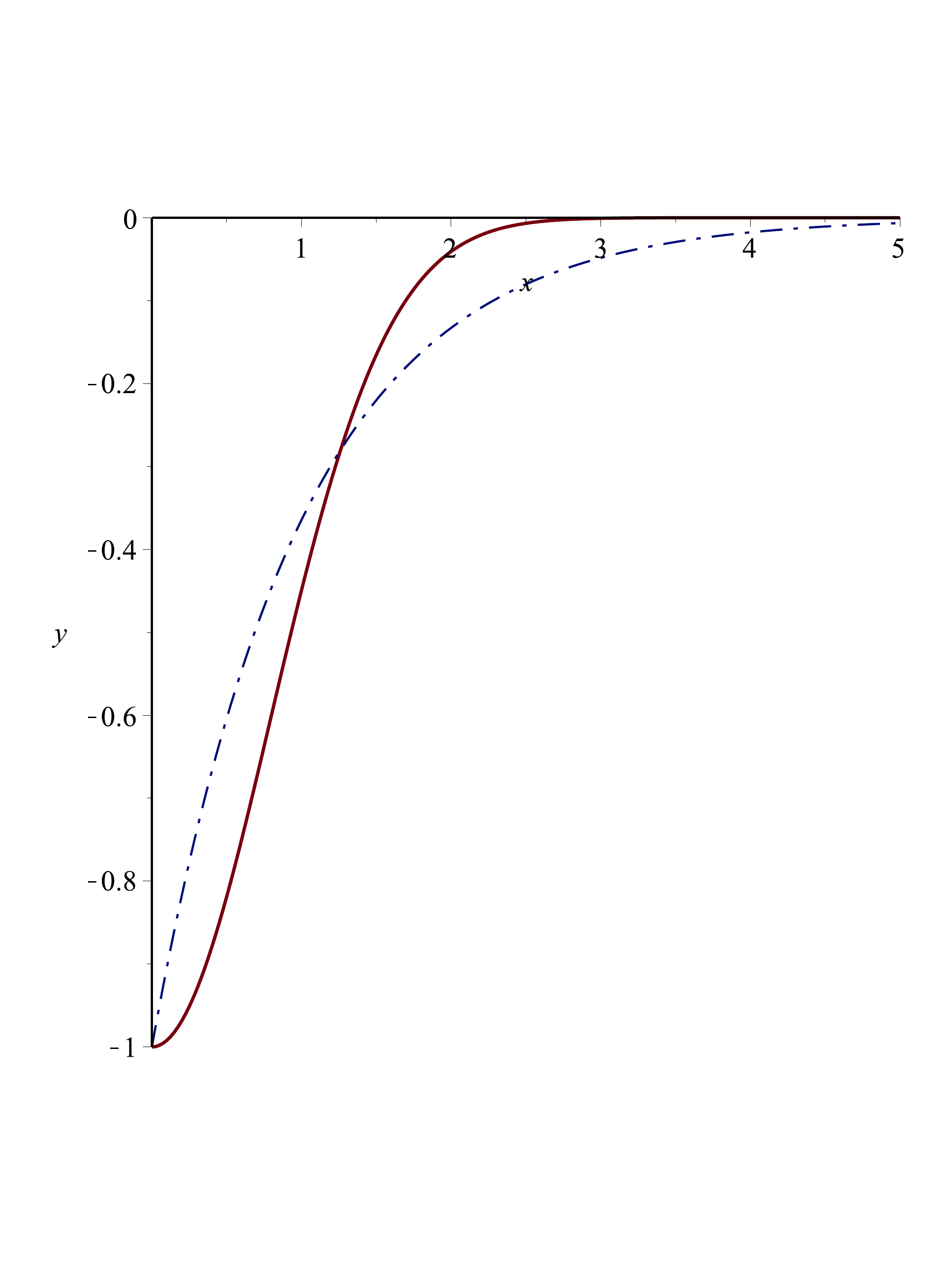}
\caption{Potential Shapes $f(x) = -Ae^{-qx^2}$ dashed lines and $f_2(x) = -Be^{-a|x|}$ full line, where $q = 0.8$, $a = \frac{4}{\sqrt{5\pi}}$, and $A = B = 1$ were applied.}
\end{figure}
\begin{figure}
\centering
\includegraphics[scale=0.35]{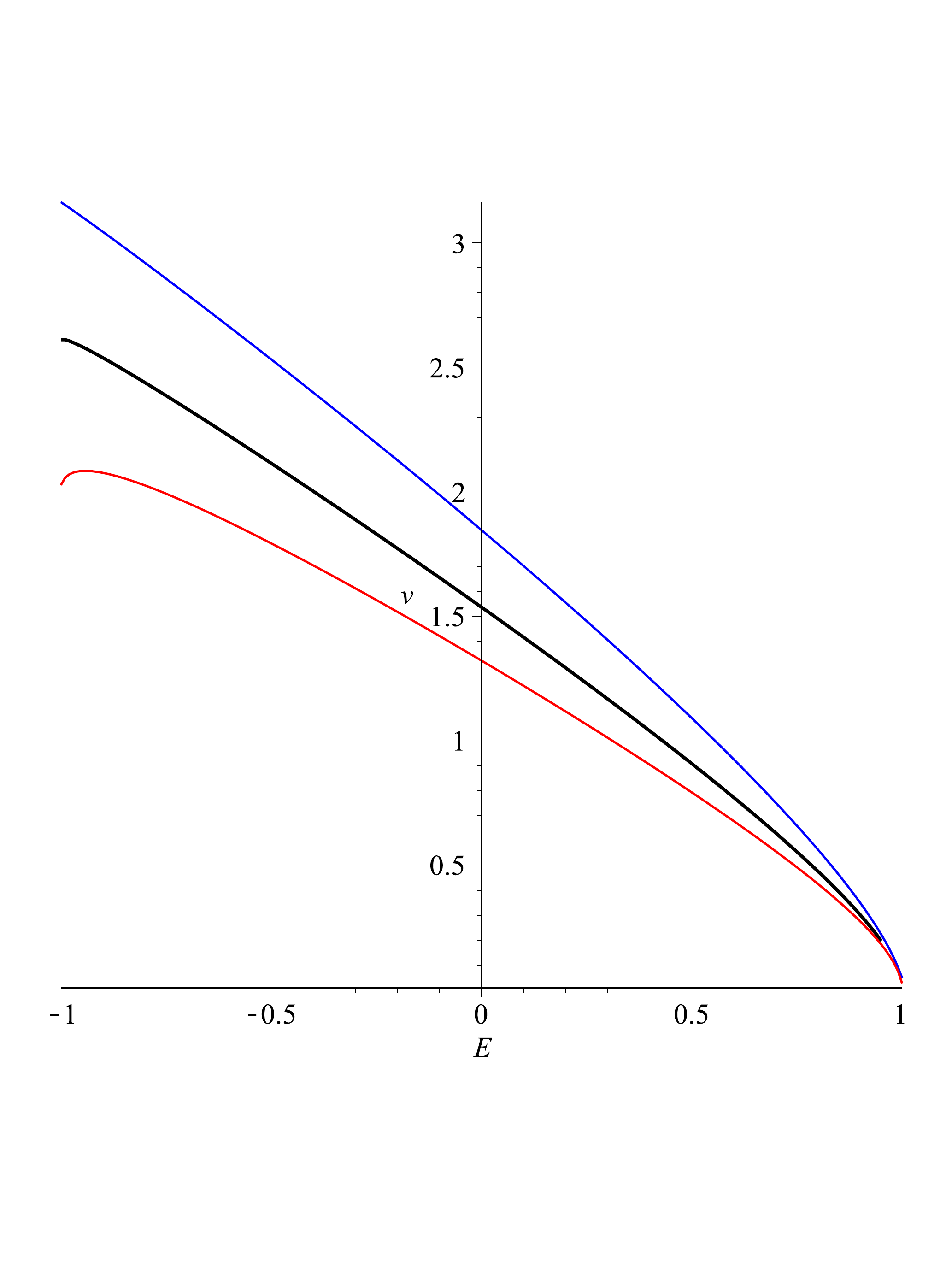}
\caption{Graphs for $v_1(E), v(E)$, and $v_2(E)$ corresponding to $f_1(x) = -1$ if $|x|\leq\frac{\sqrt{5\pi}}{4}$ and $0$ elsewhere, $f(x) = -e^{-0.8x^2}$, and $f_2(x) = -e^{-\frac{4|x|}{\sqrt{5\pi}}}$ respectively, for $-1 < E < 1$.}
\end{figure}
\item In this example we consider the sech-squared potential $V(x) = v\,f(x)$ where $f(x) = -\dfrac{\beta}{(e^{-q|x|}+e^{q|x|})^2}$\cite{PSE, CE, ETG, BYT}. We now find the spectral bounds for $\beta = 3$. We choose the exponential potentials $V_1(x) = v_1f_1(x)$ and $V_2(x) = v_2f_2(x)$ with the following parameters 
\begin{equation*}
f_1(x) = -e^{-0.46666|x|}~~{\rm and~~} f_2(x) = -0.75e^{-0.35|x|}.
\end{equation*}
This example illustrates that application the refinement theorem when the corresponding potential shapes cross over more than once:  as long as the integral of their difference is convergent, we obtain spectral bounds.  Figure~$4$ and Figure~$5$ show how the graphs of  the pairs $\{f_1, f\}$, and $\{f,f_2\}$ cross over: 
\begin{eqnarray*}
\displaystyle\int_0^{\infty}\big(f(x) - f_1(x)\big)dx = \displaystyle\int_0^{\infty}\big(f_2(x) - f(x)\big)dx = 0.
\end{eqnarray*}
We fix $E = -0.314$ and we get $v_1 = 1.9\leq v\leq v_2 = 2.39$. We have verified this result numerically, by using our own shooting method, and we find find that $v = 2.0943$.
The graphs of $v_1, v$, and $v_2$ are shown in Figure~$6$: this data is obtained from the exact analytical solutions, mentioned above.
\begin{figure}
\centering
\includegraphics[scale=0.35]{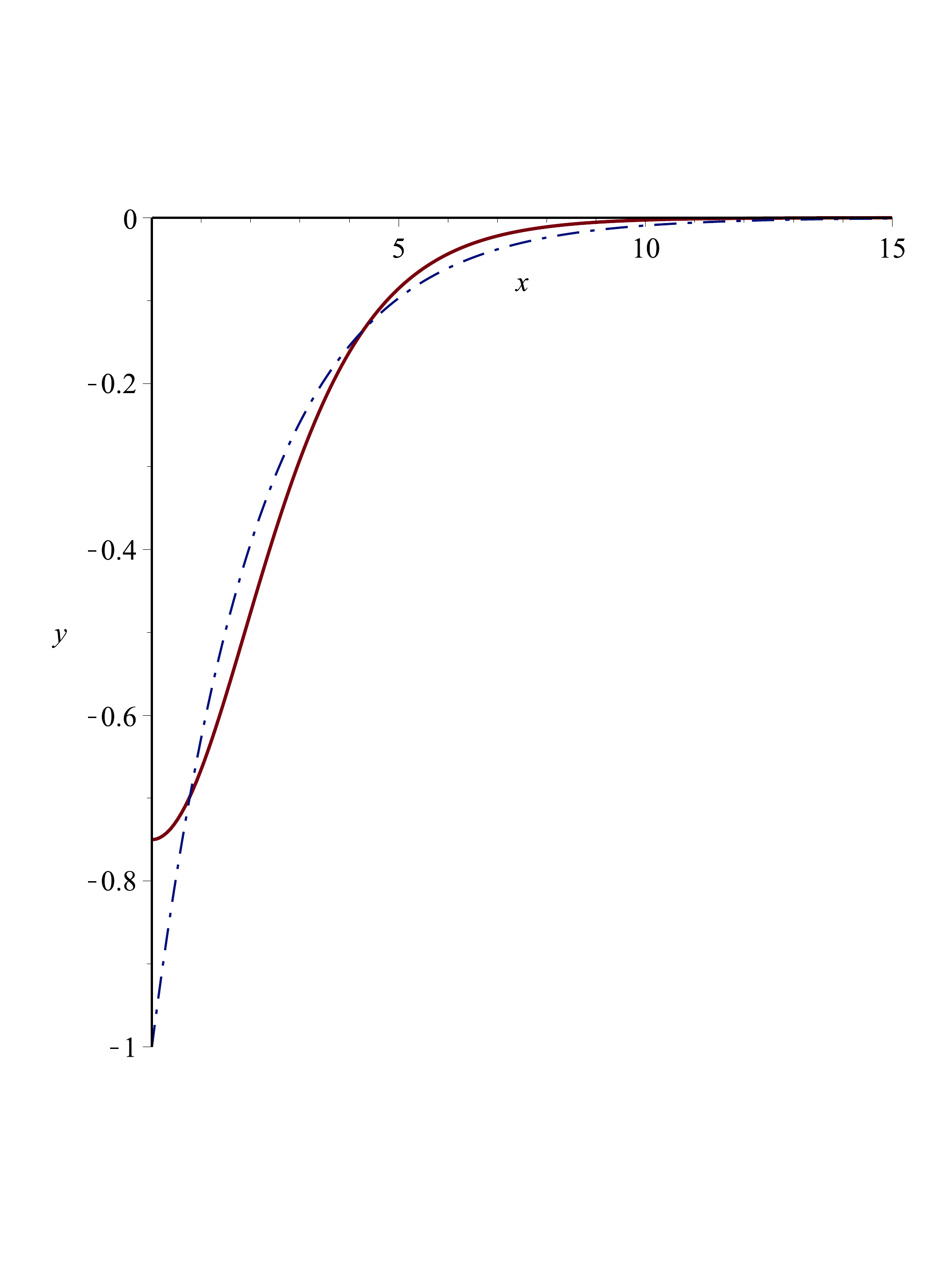}
\caption{Potential Shapes $f_1(x) = -Ae^{-q|x|}$ dashed lines and $f(x) = -\dfrac{\beta}{(e^{-q|x|}+e^{q|x|})^2}$ full line, where $q = 0.35$, and $\beta = A = 1$ were applied.}
\end{figure}
\begin{figure}
\centering
\includegraphics[scale=0.35]{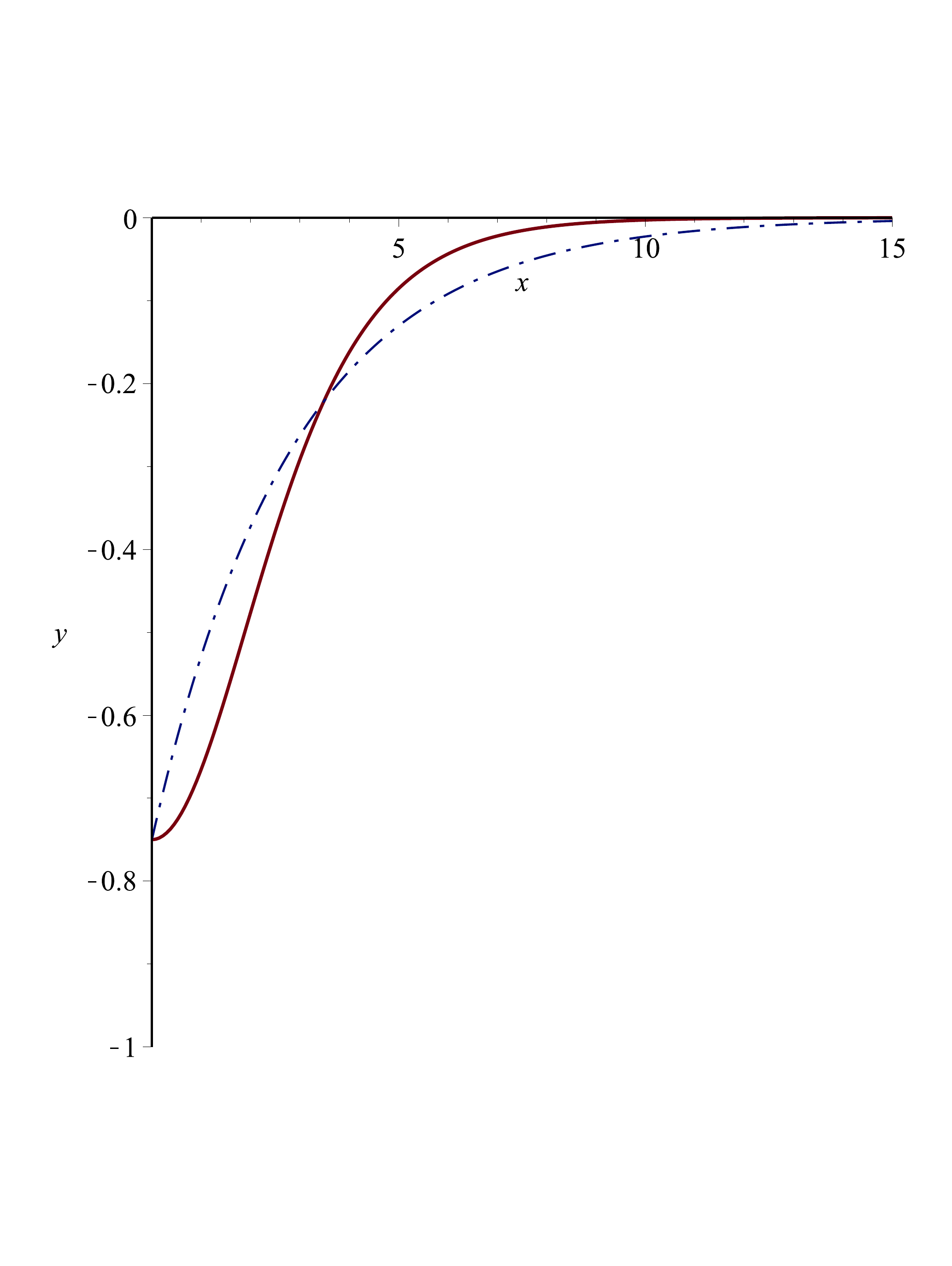}
\caption{Potential Shapes $f_2(x) = -Ae^{-q|x|}$ dashed lines and $f(x) = -\-\dfrac{\beta}{(e^{-0.35|x|}+e^{0.35|x|})^2}$ full line, where $q = 0.35$, $\beta = 1$, and $A = 0.75$ were applied.}
\end{figure}
\begin{figure}
\centering
\includegraphics[scale=0.35]{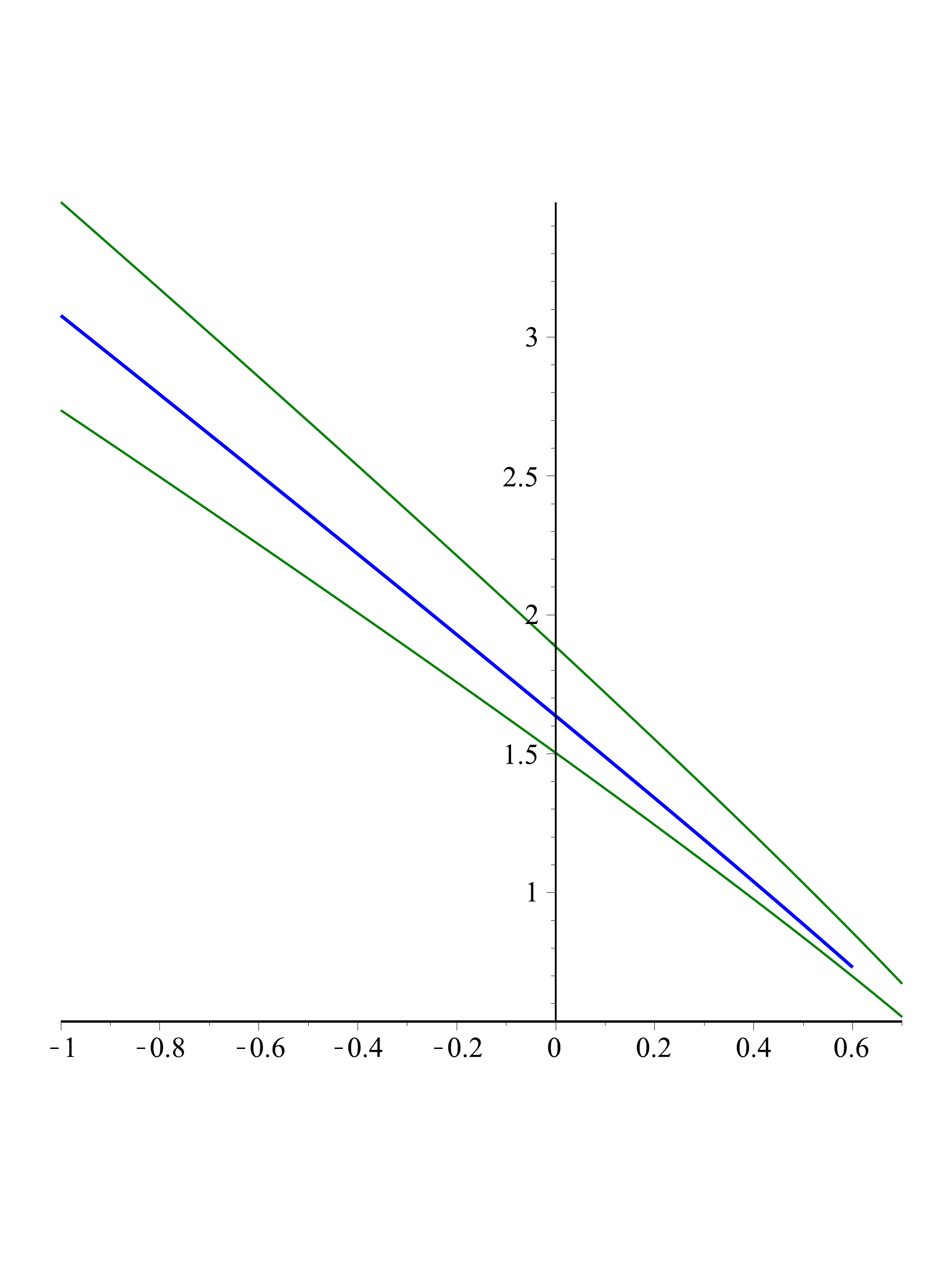}
\caption{Graphs for $v_1(E), v(E)$, and $v_2(E)$ corresponding to $f_1(x) = -e^{-0.46666|x|}$, $f(x) = -\dfrac{\beta}{(e^{-0.35|x|}+e^{0.35|x|})^2}$, and $f_2(x) = -0.75e^{-0.35|x|}$ respectively, for $-1 < E < 1$.}
\end{figure}

\end{enumerate}
\subsection{Hulth\'en and Coulomb spectral bounds for singular potentials}
Since we know the exact solutions of the Klein--Gordon equation with the Coulomb and Hulth\'en potentials \cite{Gr1, SQ, LIS, FDA}, we can find spectral bounds for any singular potential in $\mathcal{P}_d$.
\begin{center}
{\bf The Yukawa Potential in dimension $d = 3$}
\end{center}
Consider the Yukawa potential $V(r) = vf(r)$ with $f(r) = -\dfrac{e^{-ar}}{r}$ \cite{Yuk} , where $a>0$ is a range parameter. We shall find a lower and an upper bound for the coupling constant $v$, for any $E\in(-m,\,m)$, for $a = 0.5$. We choose the Hulth\'en potentials $V_1(r) = v_1f_1(r)$ and $V_2(r) = v_2f_2(r)$ where \cite{LH1, LH2, MSLH}
\begin{equation*}
f_1(r) = -\frac{1}{e^{1.001r}-1}~{\rm  and}~f_2(r) = -\frac{1}{e^{0.966r}-1}.
\end{equation*}
We fix $E = 0.96$ and obtain $v_1 = 0.4895$ and $v_2 = 0.4799$.
Since $f_1(r) > f(r)$ for $r\in[0,\infty)$ as shown in Figure~$7$, then according to our simple general comparison theorem \cite{HH}, we find that $v_1 > v$. On the other hand, $f$ and $f_2$ cross over at $r_0\approx 1.2$ as shown in the right graph of Figure~$8$,  with $\int_0^{r_0}\big(f_2(r) - f(r)\big)r^2dx = 0.0108 > 0$. Hence, applying our refined version of the general comparison theorem, we obtain $v> v_2$. We have numerically verified this result by finding that $v = 0.4834$. The graphs of $v_1(E), v(E)$, and $v_2(E)$ are shown in Figure~$9$.

\begin{figure}
\centering
\includegraphics[scale=0.35]{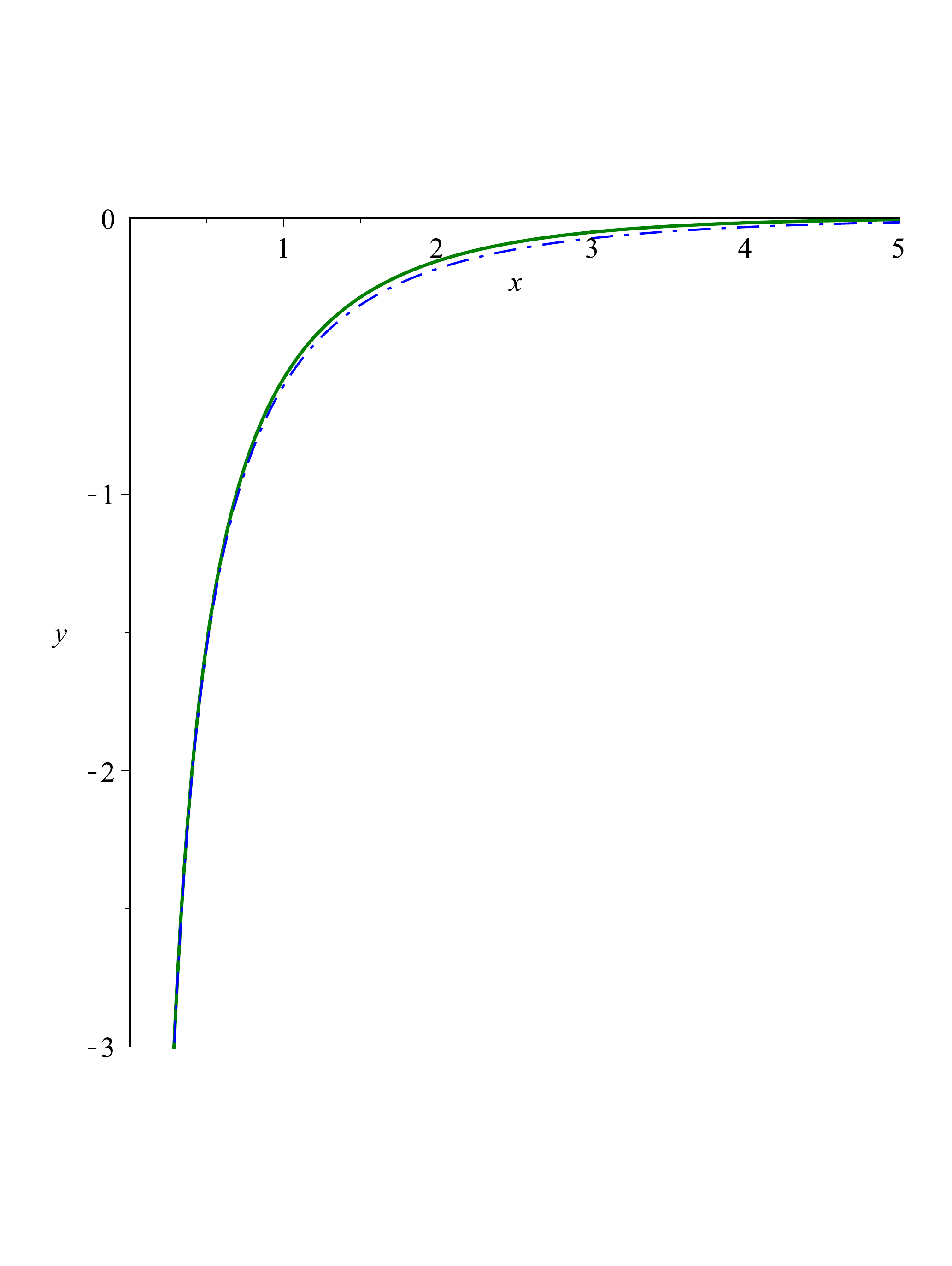}
\caption{Potential Shapes $f_1(r) = -\dfrac{1}{e^{1.001r}-1}$ full line and $f(r) = -\dfrac{e^{-ar}}{r}$ dashed lines, where $a = 0.5$ was applied.}
\end{figure}

\begin{figure}[!tbp]
\centering
\begin{minipage}[b]{0.4\textwidth}
\includegraphics[width=\textwidth]{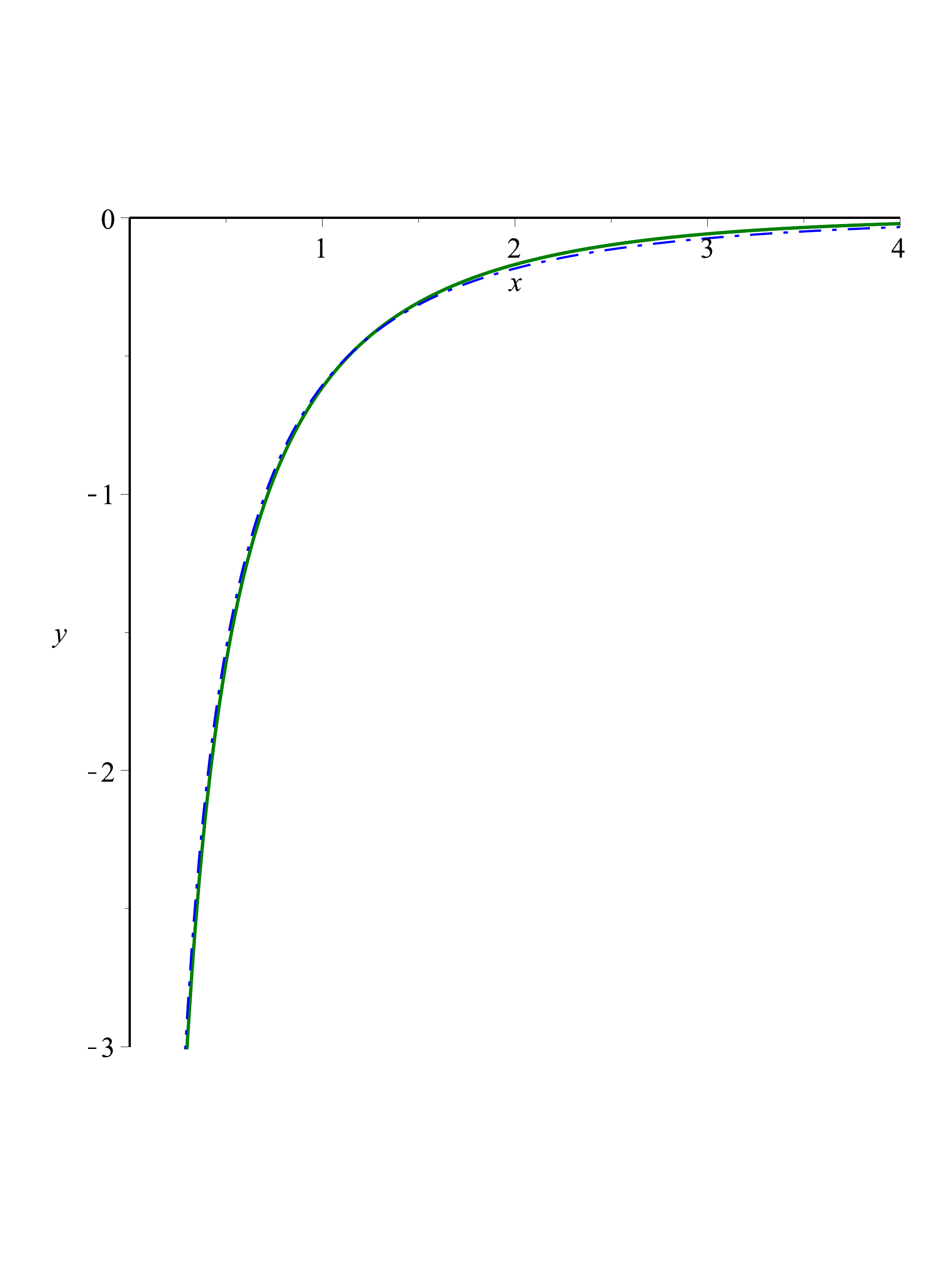}
\end{minipage}
\hfill
\begin{minipage}[b]{0.4\textwidth}
\includegraphics[width=\textwidth]{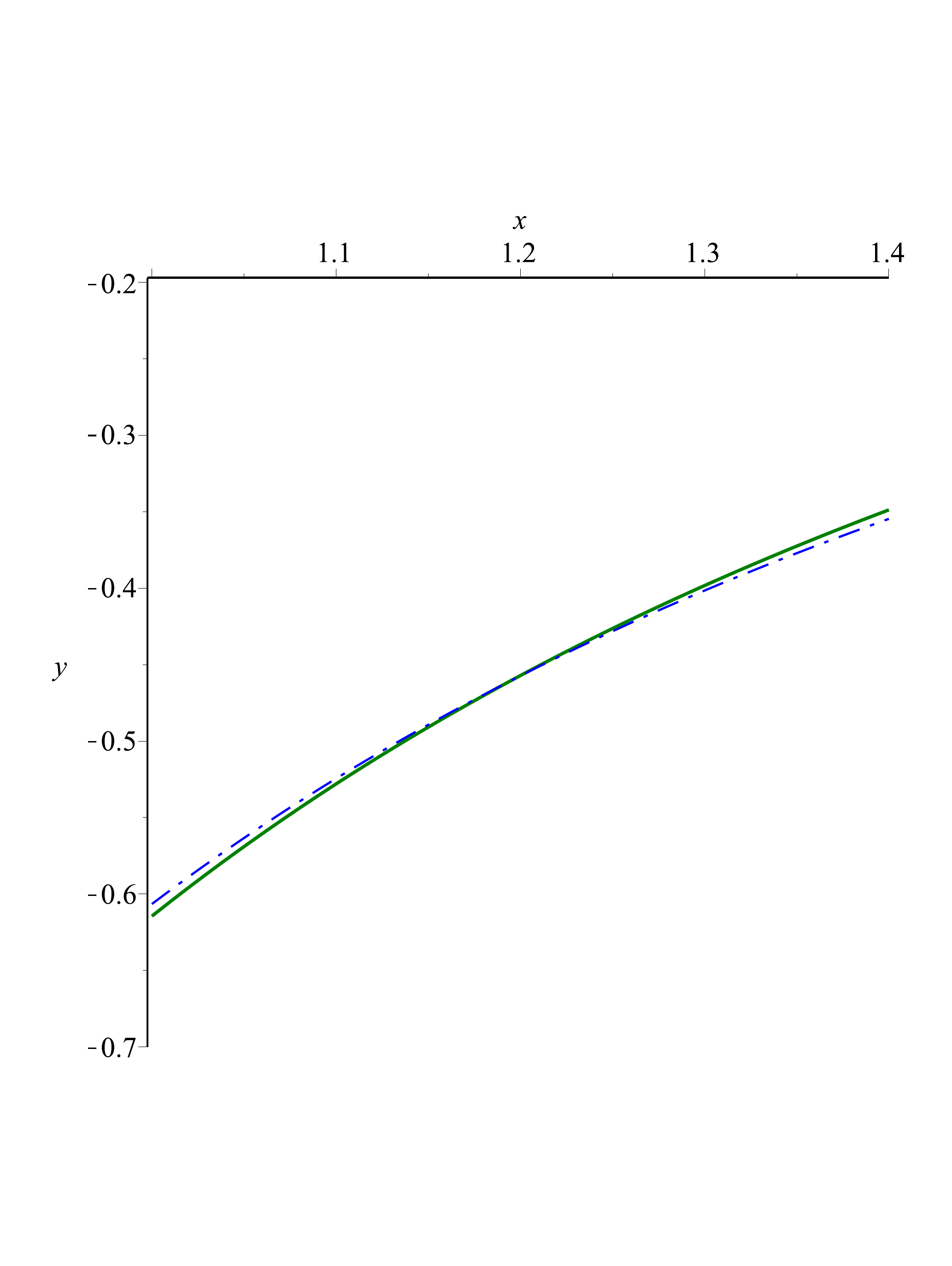}
\end{minipage}
\caption{Left graph: potential shape $f_2(r)=-\dfrac{1}{e^{0.966}-1}$ (solid line)  and $f(r) =-\dfrac{e^{-ar}}{r}$ (dashed lines) . They intersect at $r_0\approx 1.2$ as shown in the right graph. $a = 0.5$ was applied.}
\end{figure}

\begin{figure}
\centering
\includegraphics[scale=0.35]{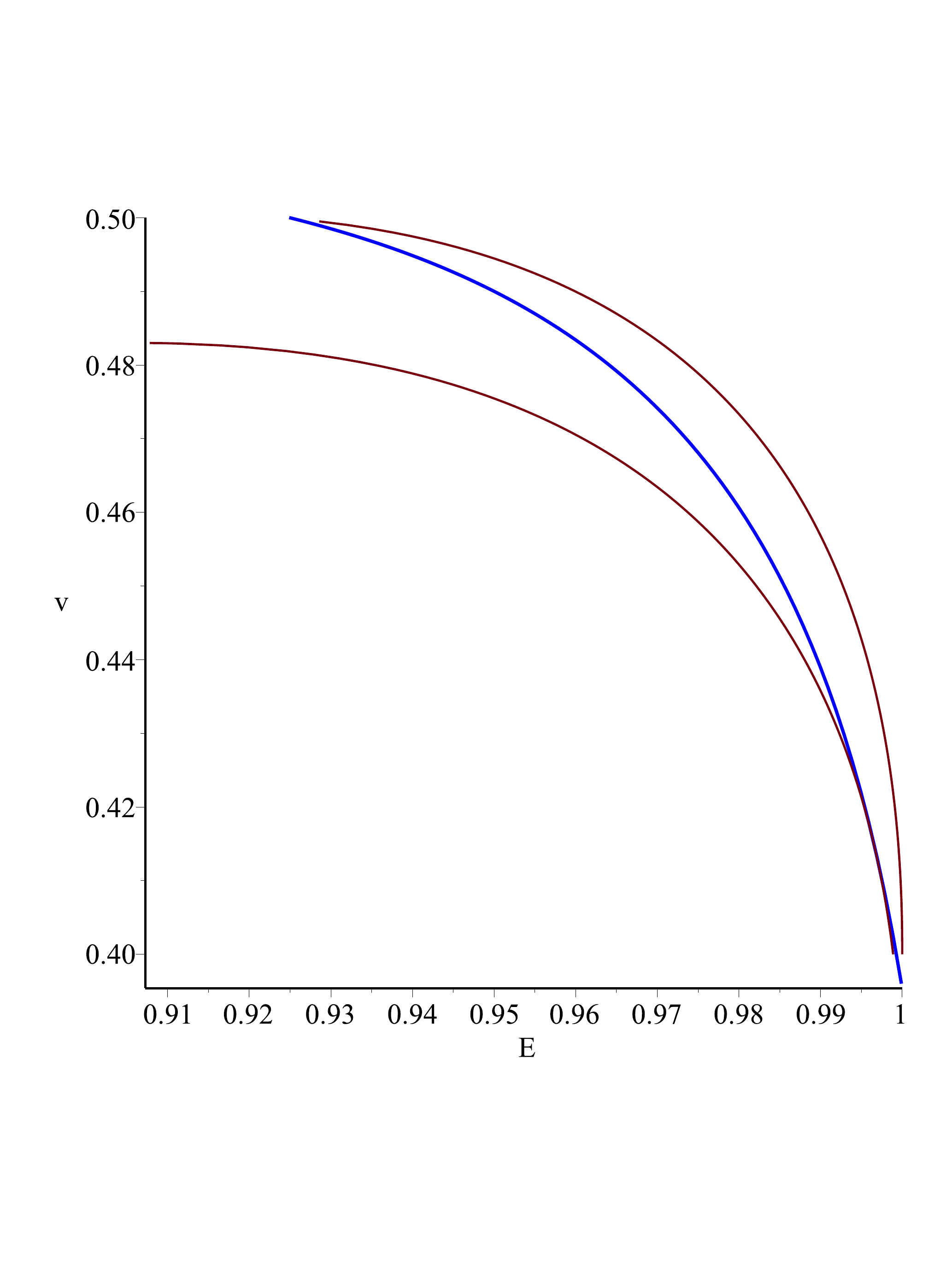}
\caption{Graphs for $v_1(E), v(E)$, and $v_2(E)$ corresponding to $f_1(r) = -\dfrac{1}{e^{1.001r}-1}$, $f(r) = -\dfrac{e^{-0.5r}}{r}$, and $f_2(r) = -\dfrac{1}{e^{0.966r}-1}$ respectively, for $-1 < E < 1$.}
\end{figure}
\newpage
\section{conclusion}
We have shown in this paper that our general comparison theorem for the Klein--Gordon equation~\cite{Aliyu},  $f_1\leq f_2\implies v_1\leq v_2$, still holds for the nodeless states, even if the condition is weakened to $\int_0^x\big[f_2(t) - f_1(t)\big]dt\geq 0$ for $d = 1$, and 
$\int_0^r\big[f_2(t) - f_1(t)\big]t^{d-1}dt\geq 0$ for $d\geq 3$, on $[0,\infty)$. Thus, the comparison potentials are allowed to cross over in a controlled way, even 
many times. We have called this kind of result a `refined comparison theorem'.  The comparison theorems in \cite{Aliyu} and \cite{HH} had a limitation, because the proof would only go through if $E$ was negative. These difficulties have been completely overcome in the present paper. The proofs are now valid for $E$ positive or negative.     We have also proven that if we know one of the wave functions $\varphi_1$ or $\varphi_2$, we can replace the above condition by $\int_0^x\big[f_2(t) - f_1(t)\big]\varphi_i(t)dt\geq 0$ for $d = 1$, and $\int_0^r\big[f_2(t) - f_1(t)\big]t^{d-1}\varphi_i(t)dt\geq 0$ for $d\geq 3$, with $i = 1,2$. The latter conditions provide us with a stronger theorem because, since the ground state is non-increasing on $[0,\infty)$, the potential shapes are allowed to cross over `even more' while preserving the ordering of the coupling parameters $v_1 = G_1(E)$ and $v_2=  G_2(E)$, for any energy $E\in(-m,\,m)$.  We have also proven that for a potential whose shape $f(r)$ is no more singular than $r^{-(d-2)}$, $(d\geq 3)$, with $f(r) = -\frac{w(r)}{r}$, where $w(r)$ is non-increasing on $[0,\infty)$ and $0 < w(0)\leq 1$, the lowest  eigenvalue                                                                                                                                                                                                                                                                      is always positive for $v < \frac{1}{2}$. As an application of our refined theorem, we have constructed upper and lower bounds for the discrete spectrum generated by a given central negative bounded potential, by using the known exact solutions of the Klein--Gordon equation with square-well and exponential potentials. 

\vskip0.2cm
Comparison theorems can be thought of as a contribution to spectral approximation theory.  This would be a rather narrow view.  A more complete description would perhaps use terms like `functional perturbation theory' which would be suggestive of what one actually gets, namely a set of almost instant approximations  
allowing for adaptive approaches to model building, as we already have in non-relativistic quantum mechanics.

\newpage
 \begin{acknowledgments}
Partial financial support of his research under Grant No.\,GP3438 from the Natural Sciences and Engineering Research Council of Canada is gratefully acknowledged by one of us (RLH).
 \end{acknowledgments}


\begin{thebibliography}{99}
\bibitem{PS} P. Strange, {\it Relativistic Quantum Mechanics}, Cambridge University Press, 1988. The Klein--Gordon equation in one dimension is discussed on pp. 269-280.
\bibitem{WG} W. Greiner, {\it Relativistic Quantum Mechanics}, Springer, third ed, 2000.
\bibitem{WGG} W. Greiner, {\it Relativistic Quantum Mechanics}, Springer-Verlag, Berlin, 1990.
\bibitem{ReedSimon}M. Reed and B. Simon, {\it Methods of Modern Mathematical Physics IV: Analysis of Operators}, (Academic Press, New York, 1978).
\bibitem{Fr} J. Franklin and L. Intemann, {\it Saddle-Point Variational Method for the Dirac Equation}, Phys. Rev. Lett. {\bf 54}, 2068 (1985).
\bibitem{Gold} S. P. Goldman, {\it Variational Representation of the Dirac-Coulomb Hamiltonian with no spurious roots},  Phys. Rev. A {\bf 31}, 3541 (1985).
\bibitem{Grant} I. P. Grant and H. M. Quiney, {\it Rayleigh-Ritz approximation of the Dirac operator in atomic and molecular physics}, Phys. Rev. A {\bf 62}, 022508 (2000).
\bibitem{RDWoods} R. D. Woods and D. S. Saxon, {\it Diffuse surface optical method for the nucleon-nuclei scattering}, Phys. Rev. 95:577 (1954).
\bibitem{Gnr1}  W. Greiner, {\it Relativistic Quantum Mechanics}, Springer, third edition, pages 59 and 61.
\bibitem{MBL} M. Bawin and J. P. Lavine, {\it The exponential potential and the Klein--Gordon Equation}, Phys. Rev. D {\bf 12}, 1192 (1975).
\bibitem{SSW} I. Schiff, H. Snydr, and J. Weinberg, {\it On the existence of stationary states of the Mesotron field}, Phys. Rev. {\bf 57}, 315 (1940).
\bibitem{VMC} V. M. Villalba and C. Rojas, {\it Bound States of the Klein--Gordon Equation in the Presence of Short Range Potentials}, International Journal of Modern Physics A, Vol. {\bf 21}, pp. 313-325 (2006).
\bibitem{Barton} G. Barton, {\it On the 1d Coulomb Klein--Gordon equation}, J. Phys. A: Math. Gen. {\bf 40}, 1011 (2007).
\bibitem{Hallcutoff} R. L. Hall, {\it Solutions to the 1d Klein--Gordon equation with cutoff Coulomb potentials}, Phys. Lett. A {\bf 372}, 5 (2007).
\bibitem{SpecDirac} R. L. Hall, {\it Spectral comparison theorem for the Dirac equation}, Phys. Rev. Lett. {\bf 83}, 468 (1999).
\bibitem{Hallrel} R. L. Hall, {\it Relativistic comparison theorems}, Phys. Rev. A {\bf 81}, 052101 (2010).
\bibitem{HallDirac} R. L. Hall, {\it Special comparison theorem for the Dirac equation}, Phys. Rev. Lett. {\bf 101}, 090401 (2008).
\bibitem{Aliyu} R. L. Hall and M. D. Aliyu, {\it Comparison theorems for the Klein--Gordon equation in d dimensions}, Phys. Rev. A {\bf 78}, 052115 (2008).
\bibitem{GChen} G. Chen, {\it Relativistic spectral comparison theorem in two dimensions}, Phys. Rev. A, 71:024102 (2005).
\bibitem{GChen1} G. Chen, {\it Spectral comparison theorem for the N-dimensional Dirac equation}, Phys. Rev. A, 72:044102 (2005).
\bibitem{HH} R. L. Hall and H. Harb, {\it General comparison theorems for Klein--Gordon equations in d dimensions}, Eur. Phys. J. Plus {\bf 134}, 449 (2019).
\bibitem{HRQ} R. L. Hall, {\it Refining the comparison theorem of quantum mechanics} J. Phys. A: Math. Gen 25 (1992) 4459-4469.
\bibitem{Dirac1} R. L. Hall and P. Zorin, {\it Refined comparison theorems for the Dirac equation in d dimensions} Ann. Phys. (Berlin) 527, 408-422 (2015).
\bibitem{Dirac2} R. L. Hall and P. Zorin, {\it Refined comparison theorems for the Dirac equation with spin and pseudo-spin symmetry in d dimensions} Eur. Phys. J. Plus 131, (2016) 102.
\bibitem{HP-16} R. L. Hall and P. Zorin, {\it Sharp comparison theorems for the Klein-Gordon equation in d dimensions} Int. J. Mod. Phys. E 25 (2016) 1650039.
\bibitem{rtd} R. Friedberg, T. D. Lee, and W. Q. Zhao, {\it Convergent iterative solutions for a Sombrero-shaped potential in any space dimension and arbitrary angular momentum}, Ann. Phys. {\bf 321}, 1981 (2006).
\bibitem{MMN} M. M. Nieto, {\it Hydrogen atom and relativistic pi-mesic atom in N-space dimensions}, Am. J. Phys. {\bf 47}, 1067 (1979).
\bibitem{SI} S. J. Gustafson and I. Michael Sigal, {\it Mathematical Concepts of Quantum Mechanics}, Springer, p: 33.
\bibitem{Hardy1} G. H. Hardy, {\it Note on a theorem of Hilbert}, Math Z. 6 (1920) pp. 314-317.
\bibitem{Hardy2} G. H. Hardy, J. E. Littlewood, and G. Polya, {\it Inequalities}, second ed., Cambridge university press, Cambridge (1952).
\bibitem{Hardy3} R. L. Frank, {\it Sobolev inequalities and uncertainty principles in Mathematical Physics}, Part I, Department of Mathematics, Princeton University (2011).
\bibitem{Gr} W. Greiner, {\it Relativistic Quantum Mechanics}, Springer, third edition, pp. 56 - 60.
\bibitem{exp} Nagalakshmi A. Rao. B.A. Kagali, {\it Spinless particles in screened coulomb potential}, Phys. Lett. A {\bf 296} (2002) 192.
\bibitem{JNKutz} J. N. Kutz {\it Notes for AMATH}, 301 (2003).
\bibitem{JDPyrce} J. D. Pyrce {\it Numerical solution of Sturm - Liouville problems}, Oxford science publications, New York (1993).
\bibitem{PSE} P. S. Epstein, {\it Reflection of waves in an inhomogeneous absorbing medium}, Proc. Natl. Acad. Sci. USA, 16:627 (1930).
\bibitem{CE} C. Eckart, {\it The penetration of a potential barrier by electrons}, Phys. Rev. 35:1303 (1930).
\bibitem{ETG} E. Teller and G. P\"oschl, {\it Bemerkungen zur quantenmechanik des anharmonischen oszillators}, Z. Phys. 83:143 (1933).
\bibitem{BYT} B. Y. Tong, {\it Scattering states of the sech-squared potential}, Solid state commun, 104: 679 (1997).
\bibitem{Gr1} W. Greiner, {\it Relativistic Quantum Mechanics}, Springer, third edition, pp. 45 - 49.
\bibitem{SQ} W. Fleischer and G. Soff, {\it Bound State Solutions of the Klein--Gordon Equation for Strong Potentials}, Z. Naturforsch. {\bf 39a}, 703-719 (1984).
\bibitem{LIS} L. I. Schiff, {\it Quantum Mechanics}, third ed. (McGraw Hill) p: 470.
\bibitem{FDA} F. Dominguez-Adame, {\it Bound state solutions of the Klein--Gordon equation with vector and scalar Hulth\'en potentials}, Phys. Lett. A, 136:175 (1989).
\bibitem{LH1} L. Hulth\'en, {\it On the characteristic solutions of the Schr\"odinger deutron equation}, Ark. Mat. Aston. Fys. 28 A:5 (1942).
\bibitem{LH2} L. Hulth\'en, {\it On the characteristic solutions of the Schr\"odinger deutron equation}, Ark. Mat. Aston. Fys. 29 B:1 (1942).
\bibitem{MSLH} M. Sugawara, L. Hulth\'en, and S. Fl\"ugge, {\it Handbuch der physik}, Springer, Berlin (1957).
\bibitem{Yuk} H. Yukawa, {\it On the interaction of elementary particles}, Proc. Phys. Math. Soc. Japan {\bf 17}, 48 (1935).
\end{thebibliography}
\end{document}